\renewcommand\onecolumngrid{
\do@columngrid{one}{\@ne}
\def\set@footnotewidth{\onecolumngrid}
\def\footnoterule{\kern-6pt\hrule width 1.5in\kern6pt}
}
\renewcommand\twocolumngrid{
        \def\footnoterule{
        \dimen@\skip\footins\divide\dimen@\thr@@
        \kern-\dimen@\hrule width.5in\kern\dimen@}
        \do@columngrid{mlt}{\tw@}
}
\colorlet{mdtRed}{red!50!black}
\newtheorem{theorem}{Theorem}
\newtheorem{lemma}{Lemma}
\newtheorem{observation}{Observation}
\newtheorem{corollary}{Corollary}
\newcommand{\Var}{{\rm Var}}
\newcommand{\mse}[1]{\text{MSE}[#1]}
\renewcommand{\vec}[1]{\boldsymbol{#1}}
\renewcommand{\epsilon}{\varepsilon}
\newcommand{\Ntot}{N_{\text{tot}}}
\newcommand{\thv}{\vec{\theta}}
\newcommand{\alv}{\vec{\alpha}}
\newcommand{\nv}{\vec{n}}
\newcommand{\sv}{\vec{s}}
\newcommand{\tv}{\vec{t}}
\newcommand{\xv}{\vec{x}}
\newcommand{\kv}{\vec{k}}
\newcommand{\av}{\vec{a}}
\newcommand{\be}{\begin{equation}}
\newcommand{\ee}{\end{equation}}
\newcommand{\fexpval}[3]{\bra{#1}#2\ket{#3}}
\newcommand{\EC}{\mathcal{E}}
\newcommand{\FC}{\mathcal{F}}
\newcommand{\WC}{\mathcal{W}}
\newcommand{\VC}{\mathcal{V}}
\newcommand{\UC}{\mathcal{U}}
\newcommand{\HC}{\mathcal{H}}
\newcommand{\OC}{\mathcal{O}}
\newcommand{\PC}{\mathcal{P}}
\newcommand{\Cbb}{\mathbb{C}}
\newcommand{\Ebb}{\mathbb{E}}
\newcommand{\Ibb}{\mathbb{I}}
\newcommand{\Rbb}{\mathbb{R}}
\newcommand{\Nbb}{\mathbb{N}}
\newcommand{\Ubb}{\mathbb{U}}
\newcommand{\ad}{^{\dagger}}
\begin{document}

\title{Exact gradients for linear optics with single photons}
\author{Giorgio Facelli}
\affiliation{ORCA Computing, London, UK}
\affiliation{Institute of Physics, \'{E}cole Polytechnique F\'{e}d\'{e}rale de Lausanne (EPFL), Lausanne, Switzerland}

\author{David D. Roberts}
\affiliation{ORCA Computing, London, UK}

\author{Hugo Wallner}
\affiliation{ORCA Computing, London, UK}

\author{Alexander Makarovskiy}
\affiliation{ORCA Computing, London, UK}

\author{Zo\"{e} Holmes}
\affiliation{Institute of Physics, \'{E}cole Polytechnique F\'{e}d\'{e}rale de Lausanne (EPFL), Lausanne, Switzerland}

\author{William R. Clements}
\affiliation{ORCA Computing, London, UK}

\date{\today}

\begin{abstract}
    Though parameter shift rules have drastically improved gradient estimation methods for several types of quantum circuits, leading to improved performance in downstream tasks, so far they have not been transferable to linear optics with single photons.
    In this work, we derive an analytical formula for the gradients in these circuits with respect to phaseshifters via a generalized parameter shift rule, where the number of parameter shifts depends linearly on the total number of photons. Experimentally, this enables access to derivatives in photonic systems without the need for finite difference approximations. Building on this, we propose two strategies through which one can reduce the number of shifts in the expression, and hence reduce the overall sample complexity. Numerically, we show that this generalized parameter-shift rule can converge to the minimum of a cost
function with fewer parameter update steps than alternative techniques. We anticipate that this method will open up new avenues to 
   solving optimization problems with photonic systems, as well as provide new techniques for the experimental characterization and control of linear optical systems.
\end{abstract}

\maketitle

\section{Introduction}
Linear optics with single photons has seen widespread interest, with significant efforts committed to developing architectures allowing efficient computation with photons. Near term applications include boson sampling~\cite{aaronson2011computational}, while in the longer term this framework can enable universal quantum computation~\cite{knill2001scheme}. At sufficiently large scales, these computations are out of reach for classical machines.

Boson sampling is a computational paradigm in which non-classical light sources are interfered within an interferometer coupling multiple optical modes. At the output, a probability distribution is generated, which is strongly believed to be classically hard to replicate~\cite{aaronson2011computational}. 
These architectures are ideal testbeds to show quantum advantage with current state-of-the-art photonic technologies~\cite{zhong2020quantum,madsen2022quantum}. Concurrently, applications to boson sampling have been suggested for several problems, such as optimization~\cite{bradler2021certain}, chemistry~\cite{sparrow2018simulating,huh2015boson,shang2024boson}, graph problems~\cite{mezher2023solving}, and machine learning (ML)~\cite{gan2022fock,wallner2023towards,yin2024experimental,saggio2021experimental}.

Universal linear optical quantum computation (LOQC) is another equally important avenue of research~\cite{knill2001scheme,raussendorf2001oneway,raussendorf2003measurementbased,bombin2021interleaving}, which requires photon-photon interactions to implement arbitrary operations among photon-encoded qubits. In particular, the Knill-Laflamme-Milburn (KLM) scheme~\cite{knill2001scheme} relies on implementing two-qubit gate operations non-deterministically by selecting the events that successfully perform the desired entangling operation. In measurement-based approaches~\cite{raussendorf2001oneway,raussendorf2003measurementbased}, instead, first a large number of \textit{resource} states (known examples are Bell or GHZ states) are created, and subsequently an arbitrary algorithm is run by performing single-qubit measurements.

Concurrently to advances in linear optics, variational quantum computing (VQC) in qubit-based platforms has seen rising interest. Variational algorithms have been demonstrated for a wide range of tasks, and can be suitable for noisy intermediate-scale quantum (NISQ) hardware~\cite{cerezo2021variational}. Typical sought-after applications include ground- and excited-state problems~\cite{peruzzo2014variational,kandala2017hardware,higgott2019variational,jones2019variational}, optimization tasks~\cite{farhi2014quantum,lin2016performance,wang2018quantum}, as well as quantum machine learning (QML) tasks~\cite{biamonte2017quantum,farhi2018classification,schuld2020circuit}.
In such variational tasks, the first step is to find an objective loss function, measurable on a quantum device, that faithfully encodes the problem and its solution. Then, the main routine is to classically update a set of parameters of the quantum device, and optimize them so as to minimize the objective loss function. Optimization is often carried out via gradient-based methods, as these methods often perform well even in complex problems where gradient-free counterparts would fail.

However, VQC in photonic platforms remains relatively unexplored. The combination of the two could bring fresh insight into how these devices could tackle not only chemistry and ML problems, but also characterization and control tasks~\cite{khatri2019quantum,  caro2022outofdistribution, jerbi2023power, Volkoff2021Universal, arrazola2019machine}. As an example, this includes more sophisticated strategies investigating the impact of imperfections in the components of linear optical systems~\cite{pai2019matrix}, which would allow for better understanding of how these imperfections can affect tasks such as resource state generation~\cite{cao2024photonic,chen2024heralded,bartolucci2021creation}.
\begin{figure}
    \centering
    \includegraphics[width=0.8\linewidth]{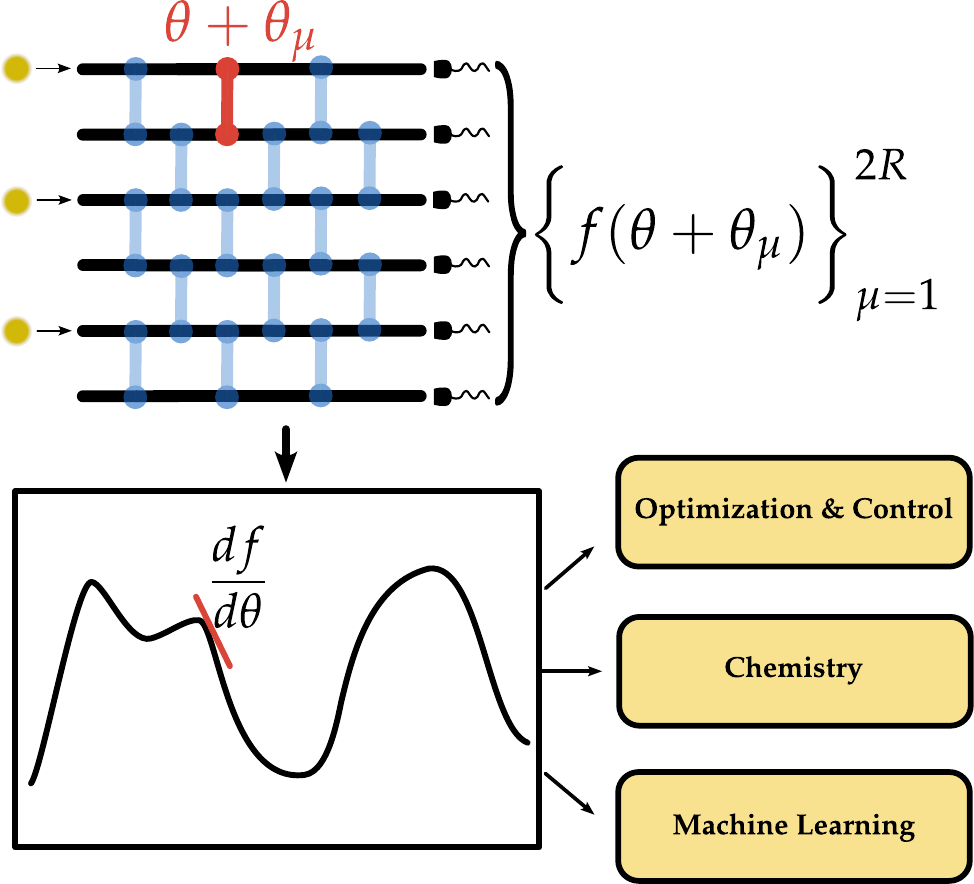}
    \caption{\textbf{The generalized parameter-shift rule in linear optics.} An arbitrary function $f(\cdot)$ generated by a linear optical circuit can be expanded into a finite Fourier series with $R$ positive frequencies. By evaluating its value at $2R$ shifted values of the argument, we are able to reconstruct its derivative, allowing us to more easily tackle a range of tasks.}
    \label{fig:conceptual_figure}
\end{figure}

In qubit-based platforms, as well as in continuous-variable optical systems, efficient strategies to obtain gradients have already been developed~\cite{li2017hybrid,mitarai2018quantum,schuld2019evaluating,banchi2020training,izmaylov2021analytic,kyriienko2021generalized,mari2021estimating,wiersema2024here,Banchi2021measuring,arrasmith2020operator,stokes2020quantum,sweke2020stochastic,kubler2020adaptive,kyriienko2021generalized}. In particular, for qubit-based systems the well-known parameter-shift rule (PSR) has attracted a lot of interest as a way to reconstruct the exact gradient for any class of parametrized quantum circuits that contains generators with symmetric eigenvalues~\cite{li2017hybrid,mitarai2018quantum,schuld2019evaluating}. This was then later extended to generators with a larger eigenspectrum~\cite{wierichs2022general,kyriienko2021generalized}. Similarly, for optical circuits in the continuous-variable regime, observables can be differentiated by performing a similar parameter shift-like rule~\cite{schuld2019evaluating}. These methods generally allow gradients to be estimated with more accuracy than finite difference methods, which are only accurate in the limit of small parameter changes. However, no such derivation exists for linear optical circuits with Fock input states.

In this work, we derive a scheme to estimate derivatives in such devices. In particular, we show that, for an arbitrary Hermitian operator, the dependence of its expectation value on a parameter in the circuit can be expanded as a finite Fourier series. From this, one can leverage trigonometric interpolation in order to reconstruct the function, and its derivatives, by taking expectation values of the same operator at shifted values of the parameter.\\

The paper is structured as follows: in Section~\ref{sec:background} we review the main concepts of linear optics and previous work on PSR approaches to computing gradients. In Section~\ref{sec:gpsr_main_result} we show how to expand an arbitrary expectation value into a finite Fourier series, after which \textit{trigonometric interpolation} is introduced. Together, these two ingredients allow us to derive the main result of this work. In Section~\ref{sec:gpsr_reduced} we propose two strategies to reduce the number of circuit evaluations needed to compute gradients. The first strategy is based on a light cone argument of the parametrized operation. The second strategy is obtained by considering a specific, yet quite general and physically relevant, family of observables. Finally, in Section~\ref{sec:applications} we demonstrate the gradient formula on some example applications. We focus on a QML binary classification task, showing that, given the same number of parameter updates, a linear optical circuit using the generalized parameter-shift rule can train more consistently than other optimization methods. Secondly, we show how the result of this work could be applied to study the impact of parameters on the fidelity of the output two-qubit state in a Bell state generation circuit. 

\section{Background}\label{sec:background}
In this section, we introduce the basics of linear optics. This also allows us to explicitly define the notation that will be used for the rest of this work. Subsequently, we review some of the prior work concerned with the derivation of PSR for qubit-based systems.

\subsection{Linear Optics with Fock states}
In the linear optics framework, passive, particle-number conserving operations are applied to several optical modes. The building blocks of the unitary transformations are phaseshifters and beamsplitters: given two modes with creation operators $a\ad_1,a\ad_2$~\footnote{For ease of notation, we do not use the hat notation for operators, provided there is no ambiguity, in which case we will specify the nature of the object.}, the unitaries of these processes are respectively defined by
\be\label{eq:linear_optics_unitaries}
\begin{split}
\PC_{\phi} &= \exp[i \phi a_1\ad a_1]\,,\\
\UC_{50:50} &=\exp[i\dfrac{\pi}{4}(a_1\ad a_2 + a_2\ad a_1)]\,.
\end{split}
\ee
One can show that such unitary evolutions correspond to the following unitary transformations applied to the mode vector $\vec{a}\ad =(a_1\ad, a_2\ad)^T$
\be\label{eq:BS_PS}
P_{\phi} =
\begin{pmatrix}
   e^{i\phi} & 0 \\
   0 & 1 
\end{pmatrix}\,,\qquad
U_{50:50} =\dfrac{1}{\sqrt{2}}
\begin{pmatrix}
   1 & i \\
   i & 1 
\end{pmatrix}\,.
\ee
Then, beamsplitters $U_{BS}(\vartheta,\phi)$ of tunable transmission (or reflectivity) are straightforwardly implemented via a Mach-Zehnder interferometer (MZI) by noting that the following identity holds
\be\label{eq:tunable_BS}
\begin{split}
    U_{BS}(\vartheta,\phi) &:= U_{50:50}P_{\pi+2\vartheta}U_{50:50}P_{\pi+\phi} \\
    &= e^{i\vartheta}
\begin{pmatrix}
   e^{i\phi}\cos(\vartheta) & \sin(\vartheta) \\
   -e^{i\phi}\sin(\vartheta) & \cos(\vartheta) 
\end{pmatrix}\,.
\end{split}
\ee
For $n$ photons and $m$ modes, a general Fock state is of the form
\be
    \ket{\nv} =\ket{n_1,\dots,n_m}  = \prod_{i=1}^m \dfrac{(a_i\ad)^{n_i}}{\sqrt{n_i!}}\ket{0}^{\otimes m}\,,
\ee
with $\sum_{i=1}^m n_i = n$, and the state $\ket{0}$ indicates the vacuum state in a given mode. Note that the dimension of the Hilbert space $\HC$ spanned by these Fock states is given by $d:=\text{dim}(\HC) = \binom{n+m-1}{n}$. The operations allowed by linear optics, defined in Eq.~\eqref{eq:BS_PS}, induce a unitary transformation $U\in\Ubb(m)$ on the vector of creation operators $\av\ad :=(a_1\ad,\dots,a_m\ad)^T$, namely
\be
    a_i\ad \xmapsto{\hspace{2pt}U\hspace{2pt}} a_i'^{\dagger} =(U^T\av\ad)_i = \sum_{j=1}^m a_j\ad u_{ji}\,.
\ee
where $(\cdot)_i$ picks the $i$-th row after the unitary mode transformation. Note that, henceforth, we will use the notation $\UC\in\Ubb(d)$ to denote a unitary evolution applied to states $\ket{\psi} \in \HC$, and $U\in \Ubb(m)$ to denote the associated unitary mode transformation which is well-defined for linear optics.

\subsection{Variational Quantum Computing}
At the heart of many algorithms and applications in quantum computing lies the ability to parametrize a desired unitary evolution by a set of parameters $\thv=(\theta_1,...,\theta_M)^T$. These  are used to generate a circuit implementing the unitary $\UC(\thv)$. At the output, some quantity of interest $f(\thv)$ is usually measured via the expectation value of a suitably defined observable $O = \sum_k f_k O_k$. This amounts to
\be
f(\thv) = \expval{O} = \bra{\psi_0}\UC\ad(\thv) O \UC(\thv)\ket{\psi_0} \,,
\ee
where $\ket{\psi_0}$ is the initial quantum state. If the task considered contains classical data, $\UC(\thv)$ could also depend on some input classical data (which we omit here for ease of notation).
This notation remains valid both for linear optics, where in this case we would consider both beamsplitters and phaseshifters $\thv=(\vartheta_1,\phi_1,\dots,\vartheta_M,\phi_M)^T$, and for qubit-based systems, where typical choices of unitary are the hardware-efficient ansatz~\cite{kandala2017hardware}, the alternating operator ansatz~\cite{farhi2014quantum,hadfield2019from} and the coupled-cluster ansatz~\cite{bartlett2007coupledcluster,cao2019quantum}.

Quite generally, one can assume $\UC(\thv) = \prod_{k=1}^M \UC_k(\theta_k)$, where $\UC_{k}(\theta_k)$ is again an arbitrary unitary (in the case of linear optics, generated by the unitaries defined in Eq.~\eqref{eq:linear_optics_unitaries}).
In this work, we are interested in the functional dependence with respect to a single parameter $\theta_k$. Thus, henceforth we will highlight the dependence on one parameter only  $\UC(\thv)=\VC\UC_k(\theta_k)\WC$, where the two unitaries $\VC,\WC$ have implicit dependence on all the other parameters. Specifically
\be
\WC = \UC(\theta_{k-1})\dots\UC(\theta_1)\,,\, \VC=\UC(\theta_M)\dots\UC(\theta_{k+1})\,.
\ee
With this notation, we consider $f$ to be a univariate function of $\theta_k$
\be
f(\theta_k) = \bra{\psi_0}\WC\ad \UC\ad_k(\theta_k)\VC\ad O \VC \UC_k(\theta_k)\WC\ket{\psi_0}
\ee
For notational simplicity, we will also rename $\theta_k\rightarrow\theta$.
\subsection{Qubit-based parameter-shift rules}

In qubit-based platforms, we often want to compute the derivative with respect to $\theta$, where the parameter appears in the ansatz  through the unitary gate $e^{i\theta G}$. If the generator $G$ contains two distinct eigenvalues $\{\pm \lambda \}$ such that $G^2 = \lambda^2\Ibb$, its derivative can be fully reconstructed with the well-known two-term PSR~\cite{li2017hybrid,mitarai2018quantum}
\be\label{eq:param_shift}
    f'(\theta) = \dfrac{\lambda}{2\sin{(\lambda a)}}\big(f(\theta+a)-f(\theta-a)\big)\,.
\ee
As an example, the formula in Eq.~\eqref{eq:param_shift} holds for Pauli words, i.e. generators where $G=\bigotimes_{i=1}^n P_i\,,\,P_i\in \frac{1}{2}\{\Ibb,\sigma_x,\sigma_y,\sigma_z\}$ (where $n$ in this example is the number of qubits).
Hence, in practice, by shifting the parameter $\theta$ by some amount $a$ and taking the expectation value of the observable $O$, and similarly with a shift $-a$, one can experimentally recover the derivative of the function.

The formula in Eq.~\eqref{eq:param_shift} can be generalized to more general unitary operations~\cite{wierichs2022general,kyriienko2021generalized}, allowing the generator $G$ to possess a spectrum with more than two symmetric eigenvalues. Here, the number $R$ determining the number of circuit evaluations corresponds to the cardinality $|\Omega^+|$ of the set of distinct positive spectral gaps, i.e. $\Omega^+ := \{\omega= \lambda_i-\lambda_j\text{ : }\omega >0\,,\, \lambda_i,\lambda_j \in \sigma(G)\}$, where $\sigma(G)$ denotes the eigenspectrum of the generator. In this setting, the derivative can be reconstructed by evaluating the function at $2R$ shifted values of the parameter.

While many gradient strategies have now been proposed on qubit-based and continuous-variable platforms~\cite{li2017hybrid,mitarai2018quantum,schuld2019evaluating,banchi2020training,izmaylov2021analytic,kyriienko2021generalized,mari2021estimating,wiersema2024here,Banchi2021measuring,arrasmith2020operator,stokes2020quantum,sweke2020stochastic,kubler2020adaptive}, there is no similar result which allows the differentiation of some objective function within the framework of linear optics with Fock input states. The work presented in Ref.~\cite{defelice2024differentiation} provides a way to differentiate expectation values of arbitrary observables in such systems, but with a considerable overhead, requiring $m+1$ additional modes and one more photon. 

Our work fills this gap: by considering the eigenspectrum of a phaseshifter acting on an arbitrary mode $\nu$, $\nu\in\{1,\dots,m\}$ where the generator is given by the number operator $n_\nu = a\ad_\nu a_\nu$ we find that the function's dependence on $\theta$ can be inferred by knowing its value at $2R+1$ distinct points, and are able to derive the analytical form of its derivatives.

\section{Gradients for linear optics with Fock states}\label{sec:gpsr_main_result}
In this section, we derive our main result. First, considering a general parametrized quantum circuit (PQC) constructed with linear optical components, and concentrating on the dependence of a parameter $\theta$, we show that the expectation value of an arbitrary observable with a Fock input state can be expanded into a finite Fourier series. Secondly, we introduce the concept of \textit{trigonometric interpolation}, which allows us to fully reconstruct a finite Fourier series function with $R$ positive frequencies, having knowledge of the value of the function at $2R+1$ points. Combining the two results, we can then reconstruct the expectation value's functional form in $\theta$, and straightforwardly deduce its derivative. 
In the limit of infinite sampling, this gives us access to the exact derivatives. For finite sampling, the gradient will not be exact but the method can still be more accurate than the finite difference method, which is only exact in the limit of infinitesimally small parameter changes. 
\begin{figure}
\centering
\includegraphics[width=0.7\linewidth]{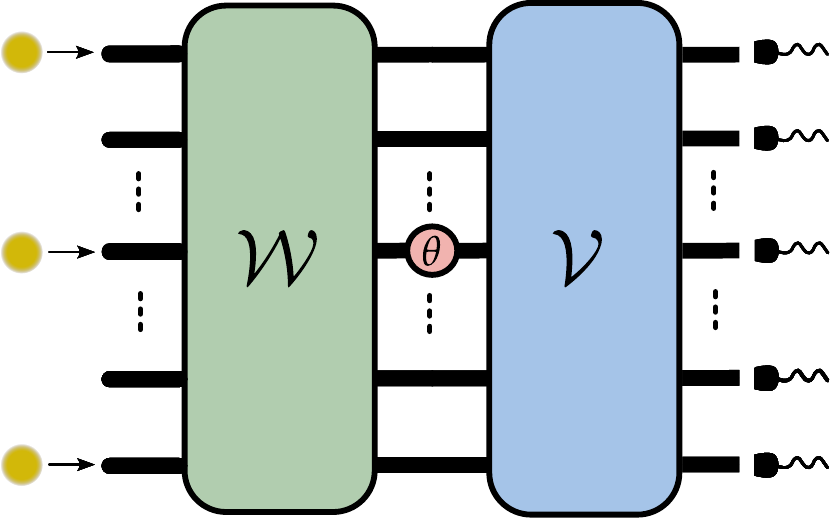}
\caption{\textbf{Parametrized interferometer.} The parametrized quantum circuit that we consider for the derivation of a closed-form derivative formula. We highlight the dependence on a phaseshifter, the parameter with respect to which we wish to compute the derivative. This can be straightforwardly generalized to apply also to beamsplitter parameters by considering that a tunable beamsplitter can be decomposed into two beam splitters and one phase shifter.}\label{fig:fockcircuit}
\end{figure}

\subsection{Fourier expansion of an expectation value}\label{sec:fourier_expansion}
It has been shown~\cite{gan2022fock} that for a linear optical circuit comprised of trainable blocks, any observable can be expanded into a finite Fourier series as a function of its programmable parameters, in a manner similar to qubit-based systems~\cite{perez2020data,schuld2021effect}. 

Our work follows a similar derivation. Consider a linear optical circuit such as the one in Fig.~\ref{fig:fockcircuit}, where arbitrary unitaries $\WC, \VC$ are applied before and after a parametrized phaseshifter $\PC_{\theta}$. Then, for any hermitian observable $O$ and initial state $\ket{\nv}$, with total photons $n$, we have
\begin{align}
    f(\theta) :=&\fexpval{\nv}{\UC\ad O \UC}{\nv}\label{eq:expval_firstline} \\
    =& \fexpval{\nv}{\WC\ad \PC_{\theta}\ad \VC\ad O \VC \PC_{\theta} \WC}{\nv}\label{eq:expval_secondline} \\ 
    =& \sum_{\sv,\sv'} O_{\sv,\sv'} \fexpval{\nv}{\WC\ad \PC_{\theta}\ad \VC\ad}{\sv} \fexpval{\sv'}{\VC \PC_{\theta} \WC}{\nv} \label{eq:expval_thirdline}\,,
\end{align}
where in Eq.~\eqref{eq:expval_thirdline} we inserted two identities on both sides of the observable. The sum is over the basis of the Hilbert space $\HC$, namely $\{\ket{\sv}=\ket{s_1,\dots,s_m}\,|\,\sum_i^m s_i = s\}$. Note that, for an arbitrary matrix $\mathcal{M}$ we use the notation $\mathcal{M}_{\sv,\nv}:=\fexpval{\sv}{\mathcal{M}}{\nv}$.

We can then expand the amplitudes. Let $\nu$ denote the mode that the $\theta$ phaseshifter acts on, $\hat{n}_\nu=\hat{a}\ad_\nu \hat{a}_\nu$ denote the corresponding number operator and $t_\nu$ be the eigenvalue associated to state $\ket{\vec{t}}$.  We then have $\PC_{\theta}\ket{\tv}=e^{i\hat{n}_\nu \theta}\ket{\tv}=e^{it_\nu \theta}\ket{\tv}$.
Thus we find 
\be\label{eq:expandamp}
    \fexpval{\sv'}{\VC \PC_{\theta} \WC}{\nv} = \sum_{\tv} \VC_{\sv',\tv} e^{it_{\nu} \theta} \WC_{\tv,\nv}\,,
\ee
Finally, the expectation value takes the following form
\be
f(\theta) = \sum_{\tv,\tv'}c_{\tv,\tv'}e^{i(t_\nu-t_\nu')\theta} \,,
\ee
where
\be
    c_{\tv,\tv'} = \sum_{\sv,\sv'} \VC_{\sv',\tv}\WC_{\tv,\nv} O_{\sv,\sv'} \WC^*_{\tv',\nv} \VC^*_{\sv,\tv'}\,.
\ee
It can be further simplified when accounting for the degeneracy of each frequency $\omega =t_\nu-t_\nu'$ to give
\be\label{eq:exp_val_fourier}
\begin{split}
f(\theta) &= \sum_{\omega \in \Omega} c_{\omega}e^{i\omega \theta}\hspace{10pt} \\
\text{where}\hspace{10pt}c_{\omega} &= \sum_{\tv,\tv'\text{ s.t. } t_\nu-t_\nu'=\omega} c_{\tv,\tv'}\,.
\end{split}
\ee
In this particular case the Fourier spectrum is $\Omega=\{-n,\dots,0,\dots,n\}$. The function $f(\theta)$ can also be cast into a trigonometric polynomial of order $n$
\be
    f(\theta) = a_0 + \sum_{\omega \in \Omega^+} a_\omega \cos{(\omega \theta)} + b_{\omega}\sin{(\omega \theta)}\,,
\ee
where we denote $\Omega^+ = \{1,\dots,n\}$ the set of strictly positive frequencies out of $\Omega$ and the coefficients $a_{\omega} = (c_{\omega}+c_{-\omega})$, $b_{\omega} = i(c_{\omega}-c_{-\omega})$. Note that the $c_{\omega}$'s depend on $\VC,\WC$ and hence on the remaining parameters in the circuit. Having expressed $f(\theta)$ as a finite Fourier series, we now proceed to show how the dependence of $f$ on $\theta$ can be inferred via trigonometric interpolation.

\subsection{Trigonometric Interpolation}\label{sec:trig_interpolation}
In this section we introduce the concept of \textit{trigonometric interpolation}, i.e. the problem of reconstructing a function which can be expressed as a finite Fourier series. Suppose the goal is to try and reconstruct a function which, similarly to Eq.~\eqref{eq:exp_val_fourier}, is in the following form
\be\label{eq:fourier_func}
    f(\theta) = \sum_{k=-R}^R c_k e^{ik\theta} \,,
\ee
where $R$ represents the number of positive frequencies. Given access to a set of points $\{(\theta_\mu,f(\theta_\mu)), \mu=1,\dots,K\}$, we recall the result that there exists an exact and unique solution for $K=2R+1$. In particular
\begin{lemma}[Trigonometric Interpolation, Ref.~\cite{atkinson1991introduction}]\label{lemma:trig_interp}
Suppose a function of the form as in Eq.~\eqref{eq:fourier_func}. Suppose furthermore that we are given the values of $f$ at $2R+1$ equidistant points in a $2\pi$ interval, e.g. we have knowledge of $\{(\theta_\mu,f(\theta_\mu))\, |\, \theta_\mu=2\pi \mu/(2R+1),\, \mu=-R,\dots,R\}$. Then one can show that
\be\label{eq:fourier_coeff}
    c_k = \dfrac{1}{2R+1}\sum_{\mu=-R}^{R} e^{-ik\theta_\mu}f(\theta_\mu)\hspace{2pt} \forall\, \mu=-R,\dots,R
\ee
\end{lemma}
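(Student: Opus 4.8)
The plan is to prove the interpolation formula by a direct orthogonality argument for the discrete characters $\mu \mapsto e^{ik\theta_\mu}$ on the equidistant grid. First I would substitute the Fourier ansatz from Eq.~\eqref{eq:fourier_func} into the right-hand side of Eq.~\eqref{eq:fourier_coeff}: writing $\theta_\mu = 2\pi\mu/(2R+1)$ and $f(\theta_\mu)=\sum_{j=-R}^R c_j e^{ij\theta_\mu}$, the candidate right-hand side becomes
\be
\frac{1}{2R+1}\sum_{\mu=-R}^R e^{-ik\theta_\mu}\sum_{j=-R}^R c_j e^{ij\theta_\mu}
= \sum_{j=-R}^R c_j \left(\frac{1}{2R+1}\sum_{\mu=-R}^R e^{i(j-k)\theta_\mu}\right).
\ee
So the whole statement reduces to the claim that the inner bracket equals the Kronecker delta $\delta_{jk}$ whenever $j,k\in\{-R,\dots,R\}$.

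The core step is therefore the discrete orthogonality relation $\frac{1}{2R+1}\sum_{\mu=-R}^R e^{i(j-k)\theta_\mu}=\delta_{jk}$. I would set $p=j-k$, noting $p$ ranges over $\{-2R,\dots,2R\}$, and evaluate $S_p := \sum_{\mu=-R}^R e^{2\pi i p\mu/(2R+1)}$. When $p\equiv 0 \pmod{2R+1}$ — which, given the range of $p$, happens only for $p=0$, i.e. $j=k$ — every term is $1$ and $S_p = 2R+1$. When $p\not\equiv 0$, $\zeta := e^{2\pi i p/(2R+1)}$ is a nontrivial $(2R+1)$-th root of unity, and $S_p$ is a geometric sum $\zeta^{-R}\sum_{\ell=0}^{2R}\zeta^\ell = \zeta^{-R}\frac{\zeta^{2R+1}-1}{\zeta-1}=0$ since $\zeta^{2R+1}=1$ and $\zeta\neq 1$. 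This establishes $S_p=(2R+1)\delta_{p\bmod(2R+1),0}$, and because $|j-k|\le 2R<2R+1$ the only way to hit a multiple of $2R+1$ is $j=k$, which is exactly what is needed; this is where the hypothesis that both the spectrum width and the number of sample points are tied to the same $R$ is essential.

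Having the orthogonality relation, the proof closes immediately: the bracket collapses to $\delta_{jk}$, the sum over $j$ picks out the single term $c_k$, and we recover Eq.~\eqref{eq:fourier_coeff} for each $k=-R,\dots,R$. For uniqueness, I would observe that Eq.~\eqref{eq:fourier_coeff} is a linear map from the $2R+1$ data values $\{f(\theta_\mu)\}$ to the $2R+1$ coefficients $\{c_k\}$; the computation above shows it is a genuine two-sided inverse of the evaluation map $\{c_k\}\mapsto\{f(\theta_\mu)\}$ (the Vandermonde-type matrix $[e^{ik\theta_\mu}]$ is invertible, with inverse $\frac{1}{2R+1}[e^{-ik\theta_\mu}]$), so the interpolating trigonometric polynomial of degree $R$ through the $2R+1$ points is unique.

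I do not anticipate a serious obstacle: the only subtlety worth stating carefully is the range bookkeeping that guarantees $j-k$ is a multiple of $2R+1$ only when $j=k$, so that no aliasing occurs — this is the one place the specific count $2R+1$ (rather than any $K\ge 2R+1$) does real work, and it is the natural point to emphasize when later applying the lemma with $R=n$ to the expectation value $f(\theta)$ of Eq.~\eqref{eq:exp_val_fourier}.
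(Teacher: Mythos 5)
Your proposal is correct and follows essentially the same route as the paper's proof in Appendix~\ref{appendix:trig_interpolation}: both reduce the claim to the discrete orthogonality relation $\sum_{\mu=-R}^{R}e^{i(j-k)\theta_\mu}=(2R+1)\delta_{jk}$, established via the geometric series for a nontrivial $(2R+1)$-th root of unity. Your explicit remark that $|j-k|\le 2R$ rules out aliasing, and your observation about the two-sided inverse giving uniqueness, are points the paper leaves implicit but do not constitute a different argument.
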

A proof is provided in Appendix ~\ref{appendix:trig_interpolation}. For the interested reader we also refer to Refs.~\cite{atkinson1991introduction,zygmund2003trigonometric}. We remark that, since the $c_k$'s depend on all the other parameters in $\VC,\WC$, the solution to the interpolation problem provided in Lemma~\ref{lemma:trig_interp} is true only for a fixed, given value of them. Note that plugging the result of Eq.~\eqref{eq:fourier_coeff} into Eq.~\eqref{eq:fourier_func}, one obtains that $f(\theta)$ can be expressed as a sum of the so-called \textit{Dirichlet kernels}
\be\label{eq:def_trig_interpolation}
    f(\theta) = \sum_{\mu=-R}^{R} f(\theta_\mu) D(\theta-\theta_\mu) \,,
\ee
defined as
\be\label{eq:def_dirichlet_kernel}
D(\theta) := \dfrac{\sin{\big((2R+1)\theta/2\big)}}{(2R+1)\sin{\big(\theta/2\big)}} \,.
\ee
Hence, a function with $2R$ symmetric frequencies, or equivalently $R$ positive frequencies, can be expanded as a sum of $2R+1$ Dirichlet kernels, each one of them weighted by the function evaluated at the equidistant points $\theta_\mu$.

\subsection{Generalized parameter-shift rule}\label{sec:GPSR}

Combining the results of Section~\ref{sec:fourier_expansion}, \ref{sec:trig_interpolation} we can now present our expression for computing the derivatives of expectation values of parametrized linear optical circuits.
\begin{theorem}[Generalized parameter-shift rule in linear optics]\label{thm:GPSR} Given an $m$-mode  PQC with $n$ photons generating a parametrized unitary evolution $\UC(\thv)\in \Ubb(d)$, the $k$-th order derivative of a function as defined in Eq.~\eqref{eq:expval_firstline} with respect to an arbitrary parameter $\theta$ is  given by
\be\label{eq:gpsr}
f^{(k)}(\theta) = \sum_{\mu=-n}^n f(\theta + \theta_\mu)D^{(k)}(-\theta_\mu) \,,
\ee
where $\theta_\mu = \frac{2\pi\mu}{2n+1}$ and the superscript $(k)$ indicates the $k$-th order derivative.
\end{theorem}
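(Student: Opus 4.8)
The plan is to combine the Fourier expansion of Section~\ref{sec:fourier_expansion} with the trigonometric interpolation of Lemma~\ref{lemma:trig_interp}, and then simply differentiate the interpolation formula term by term. Concretely, we established in Eq.~\eqref{eq:exp_val_fourier} that for a phaseshifter on mode $\nu$ the function $f(\theta) = \fexpval{\nv}{\UC\ad O \UC}{\nv}$ is a finite Fourier series with spectrum $\Omega = \{-n,\dots,n\}$, i.e. it has exactly $R = n$ positive frequencies. This is precisely the class of functions to which Lemma~\ref{lemma:trig_interp} applies with $R = n$, so we may invoke Eq.~\eqref{eq:def_trig_interpolation}--\eqref{eq:def_dirichlet_kernel} to write $f(\theta) = \sum_{\mu=-n}^{n} f(\theta_\mu) D(\theta - \theta_\mu)$ with $\theta_\mu = 2\pi\mu/(2n+1)$ and $D$ the Dirichlet kernel of order $n$.

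Next I would make the expression hold for arbitrary base point rather than only reconstructing $f$ from its values near the origin. Since $f(\theta + \cdot)$ is itself a finite Fourier series with the same spectrum $\Omega$ (shifting the argument only multiplies each Fourier coefficient $c_\omega$ by $e^{i\omega\theta}$, leaving the spectrum unchanged), Lemma~\ref{lemma:trig_interp} applies verbatim to $g(\cdot) := f(\theta + \cdot)$, giving $g(\alpha) = \sum_{\mu=-n}^{n} g(\theta_\mu) D(\alpha - \theta_\mu)$ for all $\alpha$. Evaluating at $\alpha = 0$ yields $f(\theta) = \sum_{\mu=-n}^{n} f(\theta + \theta_\mu) D(-\theta_\mu)$, and more usefully, keeping $\alpha$ free, $f(\theta + \alpha) = \sum_{\mu=-n}^{n} f(\theta + \theta_\mu) D(\alpha - \theta_\mu)$.

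The final step is differentiation. The right-hand side $\sum_{\mu=-n}^{n} f(\theta + \theta_\mu) D(\alpha - \theta_\mu)$ depends on $\alpha$ only through the smooth functions $D(\alpha - \theta_\mu)$ — the coefficients $f(\theta + \theta_\mu)$ are constants once $\theta$ is fixed — so I can differentiate $k$ times in $\alpha$ under the finite sum to get $\frac{d^k}{d\alpha^k} f(\theta + \alpha) = \sum_{\mu=-n}^{n} f(\theta + \theta_\mu) D^{(k)}(\alpha - \theta_\mu)$. Setting $\alpha = 0$ and noting that $\frac{d^k}{d\alpha^k} f(\theta+\alpha)\big|_{\alpha=0} = f^{(k)}(\theta)$ gives exactly Eq.~\eqref{eq:gpsr}. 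One should briefly note that $D$ is smooth at the points $-\theta_\mu$ (the apparent singularities of the Dirichlet kernel at multiples of $2\pi$ are removable, as $D$ is a trigonometric polynomial), so all derivatives $D^{(k)}(-\theta_\mu)$ are well defined; and one must check $-\theta_\mu \neq 2\pi\ell$ for $\mu \in \{-n,\dots,n\}$, which holds since $|\mu| \le n < 2n+1$.

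The argument is essentially immediate given the two ingredients already in hand, so there is no serious obstacle — the only subtlety worth spelling out is the reduction to arbitrary base point $\theta$ (the shift-invariance of the Fourier spectrum), since Lemma~\ref{lemma:trig_interp} as stated interpolates on a fixed grid around $0$; and, depending on how much rigor the referee wants, one might want a line confirming that for a beamsplitter parameter the same conclusion follows via the decomposition in Eq.~\eqref{eq:tunable_BS} into phaseshifters, so that "arbitrary parameter $\theta$" in the statement is justified.
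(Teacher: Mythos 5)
Your proposal is correct and follows essentially the same route as the paper: expand $f$ as a finite Fourier series with spectrum $\{-n,\dots,n\}$, apply Lemma~\ref{lemma:trig_interp} to write $f$ as a sum of Dirichlet kernels, and differentiate term by term, re-centering the interpolation grid at $\theta$. Your explicit justification of that re-centering step (via the shift-invariance of the Fourier spectrum applied to $g(\alpha)=f(\theta+\alpha)$) is in fact slightly more careful than the paper's one-line remark that the shifted parameters are ``shifted by $\theta$.''
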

\begin{proof}
    The results of section~\ref{sec:fourier_expansion} show that the expectation value of an arbitrary observable corresponds to a univariate function of an arbitrary phaseshifter parameter $\theta$ in the linear optical circuit, i.e. it has the form of Eq.~\eqref{eq:exp_val_fourier}. Then, applying the result of Lemma~\ref{lemma:trig_interp}, from Eq.~\eqref{eq:def_trig_interpolation} we get that the function's dependence on $\theta$ can be inferred via $2n+1$ circuit evaluations at shifted parameter values
    \be
    f(\theta) = \sum_{\mu=-n}^n f(\theta_\mu)D(\theta-\theta_\mu) \,.
    \ee
    From here, the derivatives of $f(\theta)$ can be inferred
    \be
    \begin{split}
        f^{(k)}(\theta) &= \sum_{\mu=-n}^n f(\theta_\mu) D^{(k)}(\theta-\theta_\mu) \\
        &=\sum_{\mu=-n}^n f(\theta+\theta_\mu) D^{(k)}(-\theta_\mu) \,,
    \end{split}
    \ee
    where in the second equality we shifted the shifted parameters $\theta_\mu$ by $\theta$. Thus we obtain Theorem~\ref{thm:GPSR}.
    \end{proof}

As an example, we explicitly evaluate the first order derivative of the Dirichlet kernel
    \be
        D'(-\theta_\mu) = \dfrac{(-1)^{\mu+1}}{2\sin(\theta_\mu /2)} \,,
    \ee
So that the first-order derivative of $f(\theta)$ is given by
\be\label{eq:GPSR_firstorder}
f'(\theta) = \sum_{\mu=1}^n\big(f(\theta + \theta_\mu)-f(\theta-\theta_\mu)\big)\dfrac{(-1)^{\mu+1}}{2\sin(\theta_{\mu}/2)} \,,
\ee
which more closely resembles a generalization of a parameter-shift-like rule, where we shift the parameter  $\theta$ positively by some amount $\theta_\mu$ and then negatively by the same amount.

In general, this generalized parameter-shift rule (GPSR) needs knowledge of $2n+1$ distinct evaluations of $f(\cdot)$. We remark that while the result is formulated for a parametrized phaseshifter, the same holds true for a parametrized beamsplitter. In fact, given the MZI decomposition in Eq.~\eqref{eq:tunable_BS}, applying a beamsplitter with parameter $\theta$ amounts to applying a phaseshifter with angle $\pi+2\theta$. Thus, after rescaling, the result of Theorem~\ref{thm:GPSR} can be readily applied. 

The result of Theorem~\ref{thm:GPSR} allows us to reconstruct arbitrary order derivatives, but in some tasks we may only be interested in evaluating the derivative $f'$. 
While a closed-form formula is already given in Eq.~\eqref{eq:GPSR_firstorder}, there exists a more efficient way of obtaining first-order derivatives, that was similarly derived for qubit-based systems in Ref.~\cite{wierichs2022general}. By \textit{efficient}, we mean a formula with a corresponding estimator that requires a smaller number of experimental samples to evaluate $f'$ within accuracy $\epsilon$ (see Section~\ref{sec:shot_noise} or Appendix~\ref{appendix:statistical_estimation} for further details). 
The result is achieved by similarly considering the task of interpolating the same Fourier series, but with $2n$ points only via the \textit{modified Dirichlet} kernels. While to reconstruct the entire function they won't be sufficient, they will instead be enough to completely determine the first derivative. As a consequence, we have that

\begin{corollary}[First-order derivative in linear optics]\label{corollary:GPSR_odd}
Given a $m$-mode PQC with $n$ photons generating a parametrized unitary evolution $\UC(\thv)\in \Ubb(d)$, the first order derivative of a function as defined in Eq.~\eqref{eq:expval_firstline} with respect to an arbitrary parameter $\theta$ is  given by
\be\label{eq:gpsr_odd}
f'(\theta) = \sum_{\mu=1}^{2n} f(\theta + \theta_\mu)\dfrac{(-1)^{\mu+1}}{4n\sin^2(\theta_\mu/2)} \,,
\ee
where in this case $\theta_\mu = \frac{(2\mu-1)\pi}{2n}$. 
\end{corollary}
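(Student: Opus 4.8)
The plan is to mirror the derivation of Theorem~\ref{thm:GPSR}, but to set up a different interpolation problem: instead of reconstructing the full function $f(\theta)$ from $2n+1$ samples, I will reconstruct only enough information to pin down $f'(\theta)$ using $2n$ samples. The key observation is that $f(\theta)$ is a trigonometric polynomial of degree $n$, so its derivative $f'(\theta)=\sum_{\omega\in\Omega} i\omega\, c_\omega e^{i\omega\theta}$ has \emph{no} constant term — the $\omega=0$ Fourier mode drops out under differentiation. Hence $f'$ is determined by only $2n$ Fourier coefficients $\{c_\omega\}_{\omega\neq 0}$ rather than $2n+1$, and one should expect $2n$ function evaluations to suffice.

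First I would invoke Section~\ref{sec:fourier_expansion} to write $f(\theta)=\sum_{\omega=-n}^{n} c_\omega e^{i\omega\theta}$. Then I would introduce the \emph{modified Dirichlet kernel} on $2n$ equidistant nodes shifted by a half-step: with $\theta_\mu=\frac{(2\mu-1)\pi}{2n}$, $\mu=1,\dots,2n$, the relevant kernel is
\be
\tilde D(\theta) := \frac{1}{2n}\sum_{\omega=-n+1}^{n-1} e^{i\omega\theta} + \frac{1}{2n}\cos(n\theta) = \frac{\sin(n\theta)}{2n\tan(\theta/2)}\,,
\ee
i.e. the Dirichlet-type kernel of the $2n$-point quadrature, where the two extremal modes $\omega=\pm n$ are given half weight (this is the standard even-order trigonometric interpolation at midpoint nodes). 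I would show that the interpolant $g(\theta)=\sum_{\mu=1}^{2n} f(\theta_\mu)\tilde D(\theta-\theta_\mu)$ agrees with $f$ on all Fourier modes $|\omega|\le n-1$ and correctly captures the \emph{sum} $c_n+c_{-n}$ at mode $n$, but possibly mismatches the antisymmetric part $c_n-c_{-n}$. Crucially, under differentiation the symmetric combination at $\omega=n$ is exactly what survives paired correctly (since $\frac{d}{d\theta}\cos(n\theta)$ only needs $c_n+c_{-n}$... ) — more carefully, I would check directly that $g'(\theta)=f'(\theta)$ by comparing Fourier coefficients mode by mode, using that the mode-$0$ coefficient of $f'$ vanishes and that the half-weighting at $\omega=\pm n$ is harmless for the derivative. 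Then $f'(\theta)=g'(\theta)=\sum_{\mu=1}^{2n} f(\theta_\mu)\tilde D'(\theta-\theta_\mu)$, and shifting $\theta_\mu\mapsto\theta+\theta_\mu$ as in the proof of Theorem~\ref{thm:GPSR} gives $f'(\theta)=\sum_{\mu=1}^{2n} f(\theta+\theta_\mu)\tilde D'(-\theta_\mu)$.

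The remaining step is the explicit computation $\tilde D'(-\theta_\mu)=\frac{(-1)^{\mu+1}}{4n\sin^2(\theta_\mu/2)}$. Differentiating $\tilde D(\theta)=\frac{\sin(n\theta)}{2n\tan(\theta/2)}$ and evaluating at $\theta=-\theta_\mu$: since $n\theta_\mu=(2\mu-1)\pi/2$, we have $\sin(n\theta_\mu)=(-1)^{\mu+1}$ and $\cos(n\theta_\mu)=0$, so the term from differentiating the numerator is $\frac{n\cos(n\theta)}{2n\tan(\theta/2)}$ which vanishes at $\theta=-\theta_\mu$, while the term from differentiating $1/\tan(\theta/2)$ gives $\frac{\sin(n\theta)\cdot(-1/2)\csc^2(\theta/2)}{2n}$ evaluated at $-\theta_\mu$; using $\sin(-n\theta_\mu)=-(-1)^{\mu+1}=(-1)^\mu$ and $\csc^2(\theta_\mu/2)=1/\sin^2(\theta_\mu/2)$, the signs combine to $\frac{(-1)^{\mu+1}}{4n\sin^2(\theta_\mu/2)}$, which is the claimed coefficient. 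I would relegate the routine trigonometric manipulations, and the verification that $\tilde D$ has the claimed closed form, to an appendix.

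\textbf{Main obstacle.} The delicate point is the claim that the $2n$-point interpolant, which does \emph{not} in general reproduce $f$ itself (it cannot — $f$ has $2n+1$ independent Fourier modes), nonetheless has a derivative that reproduces $f'$ exactly. Getting this right requires carefully accounting for how the extremal modes $\omega=\pm n$ alias onto one another on a $2n$-point grid, and arguing that the half-weighting convention makes the aliasing error purely a ``constant-mode'' artifact that is annihilated by $\frac{d}{d\theta}$. I expect this to be the part that needs the most care; everything downstream is bookkeeping and elementary trigonometry. An alternative, perhaps cleaner route that avoids reasoning about a non-faithful interpolant: write $f=\bar f + c_0$ where $\bar f$ has no constant term, observe $f'=\bar f'$, and note $\bar f$ lives in the $2n$-dimensional space spanned by $\{e^{i\omega\theta}\}_{1\le|\omega|\le n}$ — but this space is \emph{not} closed under the natural $2n$-point sampling either, so one still confronts the $\pm n$ aliasing. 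I would therefore present the direct Fourier-coefficient comparison as the core of the argument.
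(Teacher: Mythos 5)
Your overall strategy is the paper's: the paper also interpolates on the $2n$ midpoint nodes $\theta_\mu=\tfrac{(2\mu-1)\pi}{2n}$ with the modified Dirichlet kernel $D^*(\theta)=\sin(n\theta)/(2n\tan(\theta/2))$, and your closed-form evaluation of $(D^*)'(-\theta_\mu)=\tfrac{(-1)^{\mu+1}}{4n\sin^2(\theta_\mu/2)}$ is correct. However, the step you yourself flag as delicate is handled incorrectly, in two ways. First, the aliasing analysis is backwards: on the midpoint grid one has $\cos(n\theta_\mu)=0$ and $\sin(n\theta_\mu)=(-1)^{\mu+1}$, so it is the \emph{symmetric} combination $a_n=c_n+c_{-n}$ (the cosine mode) that is invisible to the samples, while the antisymmetric part $b_n=i(c_n-c_{-n})$ is correctly recovered — not the other way round as you state. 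Second, and more importantly, the claim that $g'(\theta)=f'(\theta)$ for all $\theta$, which you propose to establish by a mode-by-mode Fourier comparison, is false. A direct computation of the interpolant's Fourier coefficients gives
\be
f(\theta)-g(\theta)=a_n\cos\big(n\theta\big)\,,\qquad f'(\theta)-g'(\theta)=-n\,a_n\sin\big(n\theta\big)\,,
\ee
which does not vanish identically; it vanishes only at the center of the grid, $\theta=0$. So the honest mode-by-mode comparison you propose as "the core of the argument" would reveal a discrepancy at $\omega=\pm n$ rather than close the proof.

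The fix is exactly what the paper does: write $f(\theta)=a_n\cos(n\theta)+\sum_{\mu=1}^{2n}f(\theta_\mu)D^*(\theta-\theta_\mu)$ as an exact identity, observe that the uncontrolled correction $a_n\cos(n\theta)$ has vanishing derivative at $\theta=0$, and conclude only the pointwise statement $f'(0)=\sum_{\mu}f(\theta_\mu)(D^*)'(-\theta_\mu)$. The formula at general $\theta$ then follows by applying this identity to the recentered function $\phi\mapsto f(\theta+\phi)$, which is again a degree-$n$ trigonometric polynomial. In Theorem~\ref{thm:GPSR} the shift $\theta_\mu\mapsto\theta+\theta_\mu$ is a harmless relabeling because the $(2n+1)$-point interpolant reproduces $f$ identically; here the recentering is essential and cannot be justified by an everywhere-valid identity $g'=f'$. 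Your final formula is correct, but the argument as written does not establish it.
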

A proof of the Corollary is provided in Appendix~\ref{appendix:gpsr}.
In Appendix~\ref{appendix:lossy_GPSR}, we generalize the main results of both Theorem~\ref{thm:GPSR} and Corollary~\ref{corollary:GPSR_odd} to more realistic, lossy setups. We show that similarly to the ideal scenario, states subject to loss in linear optical systems still allow for a Fourier expansion which contains at most $n$ positive frequencies. Hence, the same results can also be applied to realistic experimental scenarios.

\subsection{Shot noise requirements and accuracy}\label{sec:shot_noise}
\begin{figure}
    \centering
    \includegraphics[width=\linewidth]{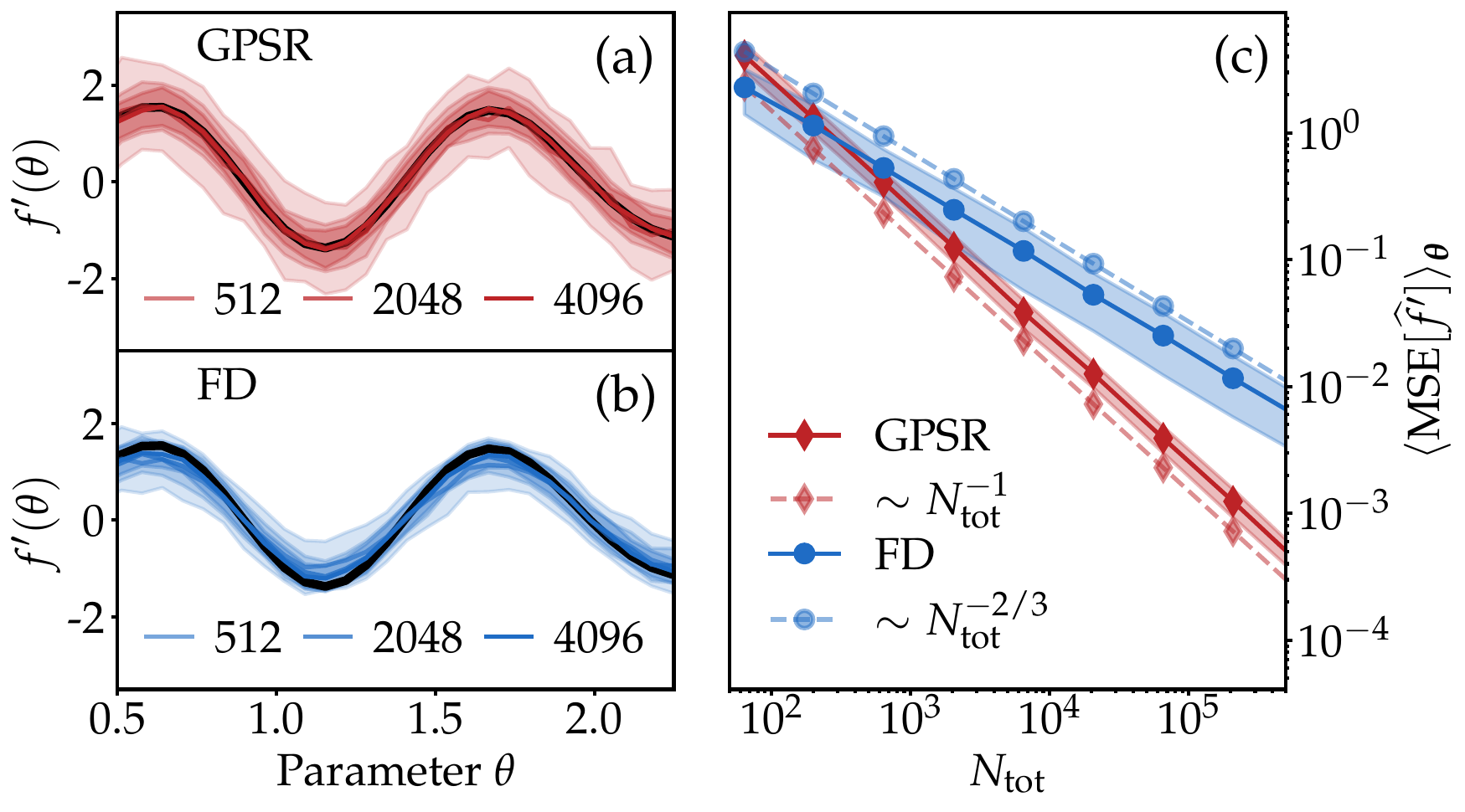}
    \caption{\textbf{First-order derivative of an observable.} We plot the derivative of the expectation value for an observable with eigenvalues chosen uniformly at random in the interval $[-5,5]$ (Eq.~\eqref{eq:random_observable}) in a $4$-mode, $4$-photon system. In particular, (a) contains the derivative reconstructed with the GPSR from Eq.~\eqref{eq:gpsr_odd}, for three different number of samples $\Ntot$, while (b) contains the same experiments with a FD approach. Both plots contain the exact derivative, highlighted in black. In (c), we plot the mean squared error, averaged over multiple sets of parameters, as a function of $\Ntot$ for the two derivatives. We include the theoretical scalings (dotted lines) derived in Appendix~\ref{appendix:statistical_estimation}. Lower number of samples yield a greater statistical uncertainty for the GPSR, but this quickly improves as $\Ntot$ grows and FD remains a biased estimator.}
    \label{fig:gradients_comparison}
\end{figure}
In this section, we evaluate the number of shots required by the GPSR and finite-difference (FD) to evaluate first-order derivatives. 
We perform such an analysis both with Dirichlet kernels and the modified Dirichlet kernels, the latter introduced to achieve the result of Corollary~\ref{corollary:GPSR_odd}.
We compare these to the central FD
\be
f'(\theta) \simeq \dfrac{f(\theta+h)-f(\theta-h)}{2h} \,,
\ee
where the step size $h$ is assumed to be small for the approximate equality to hold. The details for our calculations are provided in Appendix~\ref{appendix:statistical_estimation}.

Across all scenarios, an important assumption is that the physical variance depends weakly on the parameters
\be\label{eq:physical_variance}
    \sigma^2_{\thv} := \langle O^2 \rangle_{\thv} - \langle O \rangle_{\thv}^2 \approx \sigma^2 \,,
\ee
where $\expval{\cdot}_{\thv}=\bra{\nv}\UC\ad(\thv)\cdot\UC(\thv)\ket{\nv}$. If this were not the case, we could think of $\sigma^2$ as the value that maximizes Eq.~\eqref{eq:physical_variance}.

Then, for the GPSR, if we equally distribute the total number of samples $\Ntot$ among each distinct function evaluation in the sum of Eq.~\eqref{eq:gpsr}, 
\be
\Ntot \in \OC\bigg( \dfrac{2\sigma^2 n^2(n+1)}{3\epsilon^2}\bigg) \sim \OC\left(\epsilon^{-2}\right) \,,
\ee
measurements suffice for a desired additive error $\epsilon$. If we distribute $\Ntot$ according to the 1-norm of the weight of the Dirichlet kernel derivative, the number of samples approximately scales as
\be
\Ntot\in\OC\bigg(\dfrac{\sigma^2 n^2 \ln^2(n)}{\epsilon^2}\bigg) \sim \OC\left(\epsilon^{-2}\right)\,.
\ee
Considering instead the GPSR with the modified Dirichlet kernel from Corollary~\ref{corollary:GPSR_odd}, the sample complexity  with 1-norm weighting is further improved to
\be
    \Ntot \in \OC\bigg(\dfrac{\sigma^2 n^2}{\epsilon^2}\bigg) \sim \OC\left(\epsilon^{-2}\right)\,.
\ee
For the FD approach, instead, the number of samples that suffice will scale as~\cite{mari2021estimating}
\be
N_\text{tot} \in \OC\bigg( \dfrac{\sqrt{3}|f^{'''}(\theta)|\sigma^2}{4\epsilon^3}\bigg) \sim \OC\left(\epsilon^{-3}\right) \,,
\ee
where we remark that the scaling is found by minimizing the mean-squared error with respect to the step size $h$. In doing so we find that $h_{\text{opt}} \propto (\beta\Ntot)^{-1/6}$, where similarly to $\Ntot$, $\beta$ is a constant that depends on the specific circuit and observable considered. Hence, in general, it will be difficult to find an optimal step size so that the FD approach is well-behaved.

As an example, we compare in Fig.~\ref{fig:gradients_comparison} the partial derivative of an arbitrarily defined function and the derivative computed with GPSR and FD, for a $4$-mode, $4$-photon system. In the plot on the right-hand side we also plot the mean-squared error of each of the estimators against total number of samples $\Ntot$, along with the theoretical sample complexity scalings. Specifically, we take the definition of Eq.~\eqref{eq:expval_firstline} where $O$ is defined as
\be\label{eq:random_observable}
O = \sum_{\nv}\lambda_{\nv}\ket{\nv}\bra{\nv}\,,
\ee
where the eigenvalues are some fixed values chosen as $\lambda_{\nv}\thicksim U(-5,5)$. We remark that while at very low number of samples the GPSR yields a higher inaccuracy given by the need to evaluate a greater number of distinct circuits, as  $\Ntot$ is increased the GPSR improves over a conventional FD approach, testament to the different scalings of $\Ntot$ with $\epsilon$. Furthermore, while the FD needs heuristic fine-tuning depending on the task considered, the GPSR is already optimal.

\section{Exact simplifications to the GPSR}\label{sec:gpsr_reduced}
In general, the GPSR introduced in section \ref{sec:GPSR} contains a number of circuit evaluations that scales linearly with the total number of photons, and may be asymptotically unfavourable for large enough systems. To circumvent this, in this section we provide two strategies that allow us to reduce the overall sample complexity. In particular, both of them show how we can reconstruct exact derivatives with fewer parameter shifts via a more careful analysis of the Fourier spectrum. This then allows us to reduce the cardinality $R$, which indicates how many distinct circuit configurations need estimation for the GPSR.

\subsection{Number of shifts depends on phaseshifter’s light cone}\label{sec:light_cone}
In this section, we show that depending on the position of the parameter with respect to which we want to compute the gradient, it is possible to reduce the number of shifts in the GPSR, and hence reduce the overall sample complexity. We formalize this by considering the size of the light cone between the initial state and the phaseshifter $\PC_\theta$. By \textit{light cone}, we mean the effective number of modes with which mode $\nu$ interacted before the phaseshifter is applied. In particular, if $\WC\ad$ connects mode $\nu$ to a subset $A$ of the total modes with size $m_A$ and $n_A$ photons, then we are able to reduce the number of parameter shifts to $2n_A$. A drawing of the setup for three different interferometer schemes is provided in Fig.~\ref{fig:resource_savings}, where for each one we draw two light cones generated by the gates situated at their vertex. We then have the following result
\begin{corollary}[Light cone bound on the number of shifts]\label{corollary:connectivity_reduced_GPSR}
    Given the same assumptions of Theorem~\ref{thm:GPSR}, and additionally supposing that the size of the light cone generated by $\WC\ad$ is $m_A$ and contains $n_A$ photons. We then have that the GPSR is reduced to
\be
f^{(k)}(\theta) = \sum_{\mu=-n_A}^{n_A}f(\theta + \theta_\mu)D^{(k)}(-\theta_\mu) \,,
\ee
where $\theta_\mu = \frac{2\pi\mu}{2n_A+1}$. Similarly, the result of Corollary~\ref{corollary:GPSR_odd} reduces to
\be
f'(\theta) = \sum_{\mu=1}^{2n_A} f(\theta + \theta_\mu)\dfrac{(-1)^{\mu+1}}{4R\sin^2(\theta_\mu/2)} \,,
\ee
where $\theta_\mu = \frac{(2\mu-1)\pi}{2n_A}$.
\end{corollary}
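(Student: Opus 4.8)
The plan is to re-run the Fourier-expansion argument of Section~\ref{sec:fourier_expansion} while tracking that the light-cone hypothesis forces the generator $\hat n_\nu$ of $\PC_\theta=e^{i\theta\hat n_\nu}$ to act with a reduced effective spectrum on the state reaching the phaseshifter. Since the $\theta$-dependence of $f$ enters only through $\PC_\theta$, it is controlled entirely by the eigenvalues of $\hat n_\nu$ on the ket $\WC\ket{\nv}$ (and, by conjugation, on the bra $\bra{\nv}\WC\ad$). So the one claim to establish is: \emph{if $\WC\ad$ connects mode $\nu$ only to the modes of $A$, with $m_A=|A|$ and $n_A$ of the photons of $\ket{\nv}$ sitting in $A$, then $\hat n_\nu$ restricted to the support of $\WC\ket{\nv}$ has eigenvalues in $\{0,1,\dots,n_A\}$.}

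To prove the claim I would pass to the Heisenberg picture: $\WC\ad\hat n_\nu\WC=\tilde a\ad\tilde a$ with $\tilde a\ad:=\WC\ad a_\nu\ad\WC=\sum_j c_j a_j\ad$ and $\sum_j|c_j|^2=1$ by unitarity of the mode transformation. The light-cone hypothesis is exactly the statement $c_j=0$ for $j\notin A$, so $\tilde a\ad$ is a unit vector in the $m_A$-dimensional space $\mathrm{span}\{a_j\ad:j\in A\}$. Extend $\tilde a\ad$ to an orthonormal set of $m_A$ mode operators spanning that space; together with the $m-m_A$ modes outside $A$ this is a new orthonormal set of $m$ modes. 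The associated basis change acts only within $A$ and hence preserves the total photon number on $A$; expanding $\ket{\nv}$ in the corresponding Fock basis therefore produces only basis states with exactly $n_A$ photons among the new $A$-modes (and $n-n_A$ outside), so the occupation of the $\tilde a$-mode never exceeds $n_A$. (Note $\nu\in A$ always, so $m_A\ge1$, and the bound is tight, e.g.\ when $\WC$ funnels all $n_A$ photons of $A$ into mode $\nu$.)

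Granting the claim, write $\ket{\psi_j}$ for the component of $\WC\ket{\nv}$ with $\hat n_\nu=j$, so $j$ runs over $0,\dots,n_A$ and $\WC\ket{\nv}=\sum_{j=0}^{n_A}\ket{\psi_j}$. Then $\VC\PC_\theta\WC\ket{\nv}=\sum_{j=0}^{n_A}e^{ij\theta}\,\VC\ket{\psi_j}$, whence
\[
f(\theta)=\sum_{j,j'=0}^{n_A}e^{i(j-j')\theta}\,\fexpval{\psi_{j'}}{\VC\ad O\VC}{\psi_j}\,,
\]
a finite Fourier series with spectrum contained in $\Omega=\{-n_A,\dots,n_A\}$, i.e.\ with $R=n_A$ positive frequencies. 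The statement is then immediate: applying Lemma~\ref{lemma:trig_interp} with $R=n_A$ reconstructs $f$ and all its derivatives from the $2n_A+1$ shifted evaluations $f(\theta+\theta_\mu)$ with $\theta_\mu=2\pi\mu/(2n_A+1)$, giving the first displayed formula; and repeating the modified-Dirichlet-kernel argument of Corollary~\ref{corollary:GPSR_odd} verbatim with $n\to n_A$ (so $R=n_A$ in the weights $(-1)^{\mu+1}/(4R\sin^2(\theta_\mu/2))$ and $\theta_\mu=(2\mu-1)\pi/(2n_A)$) gives the second. The beamsplitter case again follows from the MZI decomposition of Eq.~\eqref{eq:tunable_BS}.

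The main obstacle is precisely the boxed claim on the reduced spectrum; everything downstream is a mechanical copy of Theorem~\ref{thm:GPSR} and Corollary~\ref{corollary:GPSR_odd} with $n$ replaced by $n_A$. The delicate point is to phrase "$\WC\ad$ connects $\nu$ to $A$" rigorously as $c_j=0$ outside $A$, and to justify cleanly that a passive interferometer mixing only the $A$-modes cannot push the occupation of any single mode above the photon number initially contained in $A$ — I would isolate this as a short standalone lemma on subset-wise photon-number conservation under passive linear optics.
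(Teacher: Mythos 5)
Your proof is correct, and it takes a genuinely different (and arguably cleaner) route than the paper's. The paper formalizes the light-cone hypothesis as a tensor factorization $\WC=\WC_A\otimes\WC_B$ with $\nu\in A$, writes $\ket{\nv}=\ket{\nv_A}\otimes\ket{\nv_B}$, and collapses the amplitude $\langle\sv'|\VC\PC_\theta\WC\ket{\nv}$ to a reduced amplitude $\langle\sv'_A|\widetilde{\VC}_{\sv',\nv}\PC_\theta\WC_A\ket{\nv_A}$ living entirely in the $n_A$-photon subspace $\HC_A$, after which the original proofs of Theorem~\ref{thm:GPSR} and Corollary~\ref{corollary:GPSR_odd} apply verbatim there. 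You instead work in the Heisenberg picture on the generator itself: you back-propagate $a_\nu\ad$ through $\WC$, observe that the light-cone condition means the resulting mode operator $\tilde a\ad=\sum_{j\in A}c_ja_j\ad$ is supported on $A$, and bound the spectrum of $\hat n_\nu$ on $\WC\ket{\nv}$ by $n_A$ via subset-wise photon-number conservation under a basis change that acts only within $A$. Both arguments land on the same Fourier spectrum $\{-n_A,\dots,n_A\}$, and the downstream interpolation steps are identical. What your version buys is a slightly weaker and more operational hypothesis: you only need the single column of the mode unitary associated with $\nu$ to be supported on $A$, rather than the full block-diagonal factorization $\WC=\WC_A\otimes\WC_B$ that the paper assumes; what the paper's version buys is that it recycles the existing amplitude expansion of Eq.~\eqref{eq:expval_thirdline} with no new spectral lemma. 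Your suggestion to isolate the subset-wise photon-number-conservation statement as a standalone lemma is sound and would make the argument fully rigorous.
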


A proof of the Corollary is provided in Appendix~\ref{appendix:gpsr_reduced1}. For certain circuits, this can be significant. As an example, we show that for different schemes of parametrized interferometers, a considerable overhead reduction can be achieved when computing the gradient $\vec{\nabla}_{\thv}f(\thv)$.
\begin{figure*}[ht]
    \centering
    \includegraphics[width=\textwidth]{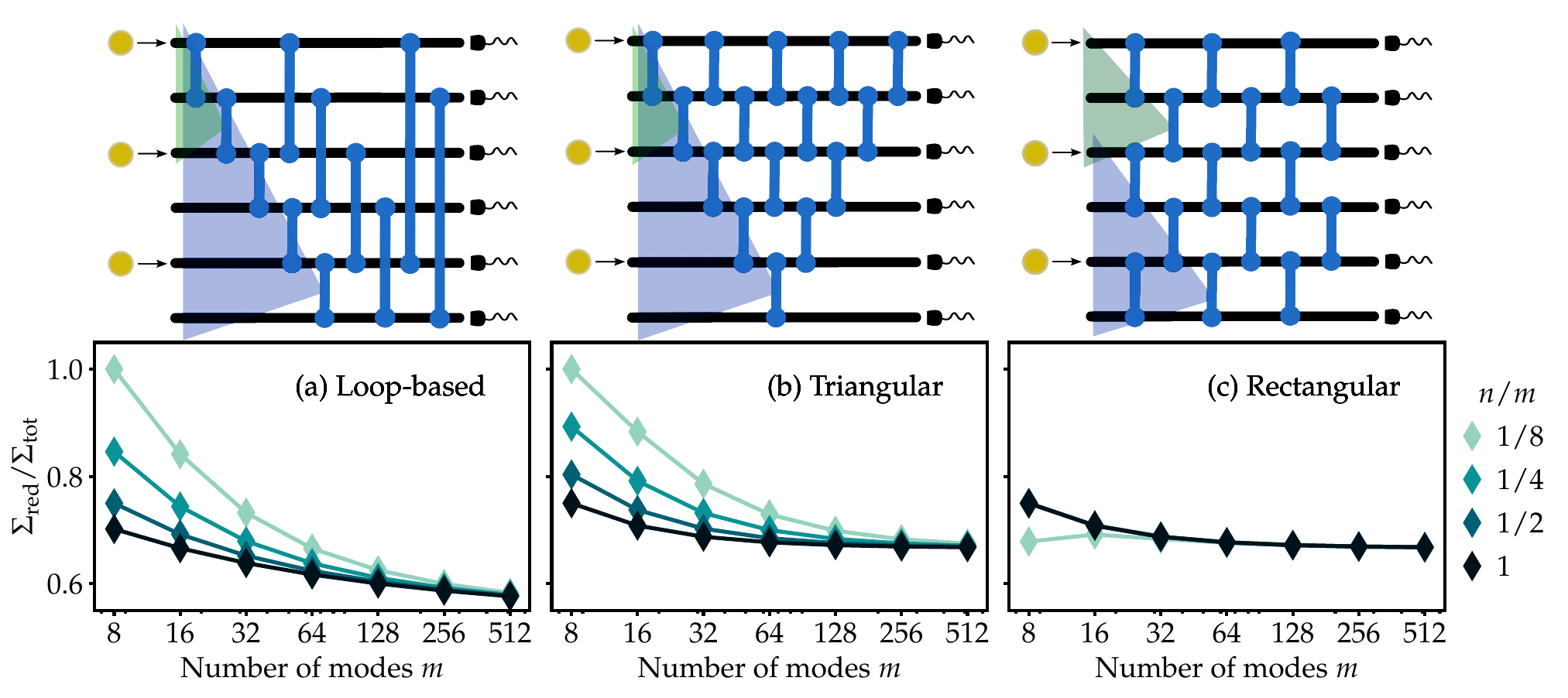}
    \caption{\textbf{Light cone bound resource savings.} We provide a sketch of each of the interferometer schemes with $n/m=1/2$ for which we analyze the reduction in the number of parameters shifts. The circles at the input indicate photons, whereas the vertical blue lines indicate tunable beamsplitter elements (MZIs). For two elements in the circuit, we draw the light cones (green, blue triangle) spanning from the action of the preceding gates. In the plots below, we show the total number of parameter shifts $\Sigma_{\text{red}}$ that one needs in order to get all the gradients following the result of Corollary~\ref{corollary:connectivity_reduced_GPSR}, as a fraction of the total number of shifts $\Sigma_{\text{tot}}$ that one would do as per Theorem~\ref{thm:GPSR} for (a) loop-based scheme (b) triangular scheme and (c) rectangular scheme interferometers.}\label{fig:resource_savings}
\end{figure*}

To analyze the savings in more detail, we focus on the following two figures of merit
\be\label{eq:total_parameter_shifts}
    \Sigma_{\text{tot}} = \sum_{\theta \in \thv} 2n = 2nM \qquad 
    \Sigma_{\text{red}} = \sum_{\theta \in \thv} 2n_A(\theta)\,,
\ee
where we recall that $M$ is the number of parameters. Hence, $\Sigma_{\text{tot}}$ is the total number of parameter shifts (and hence expectation values) that, following the result of Theorem~\ref{thm:GPSR}, we would need to perform in order to compute the gradient. The quantity $\Sigma_{\text{red}}$ represents instead the total number of shifts that we actually require considering Corollary~\ref{corollary:connectivity_reduced_GPSR}. These quantities allow us to quantify the number of circuit evaluations that, in practice, are superfluous when evaluating the gradient of a parametrized interferometer.

In Fig.~\ref{fig:resource_savings} we consider three different types of linear optical circuits with an input state consisting of alternating single photons and vacuum. We consider a time-bin interference scheme with optical loops of different lengths, a triangular scheme~\cite{reck1994experimental} and a rectangular scheme~\cite{clements2016optimal}. For the loop-based scheme, the goal is to analyze shallow-yet-hard-to-simulate configurations~\cite{madsen2022quantum,go2024computational}. In particular, we take inspiration from Ref.~\cite{go2024computational} where their proposed circuit can be described by a sequence of loops of length $[2^0,2^1,2^2,\dots,2^{\lceil\log_2(m)\rceil-1}]$. We assume that each beamsplitter in these schemes is implemented by a MZI containing two programmable phaseshifters, and we are interested in estimating the gradients with respect to all the circuit parameters. A diagram for each of the three circuits is provided, and below we plot the ratio between $\Sigma_{\text{red}}$ and $\Sigma_{\text{tot}}$ as a function of the number of modes, for different levels of photon occupation number $n/m$. 

In all three scenarios we find that, as the number of modes (and photons) grows, the ratios asymptotically approach constant values significantly smaller than $1$. Savings of $30$-$40\%$ can be achieved across the different architectures. In addition, we find that the photon occupation number impacts the ratio only at small system sizes, while for larger systems they converge to the same constant value. In particular, loop-based and triangular scheme have significantly different behaviours across the occupancy levels given their imbalanced structure. The rectangular scheme, on the contrary, has similar behaviour as it acts more uniformly across the modes.

\subsection{Number of shifts depends on the observable}
In this section, we show that for a specific yet physically relevant family of observables, the number of shifts required in the GPSR can be reduced. Analogously to the derivation of the GPSR, we consider the system to undergo some unitary evolutions $\WC,\VC \in \Ubb(d)$, with an additional phaseshifter acting on mode $\nu$ in between the two unitaries. However, if the observable is some polynomial in the number operators $(\hat{n}_1,\dots,\hat{n}_m)$ with degree $p<n$, we show that the number of positive frequencies in the Fourier expansion is given by $p$. Hence, $2p$ evaluations of the function will be required to reconstruct its exact derivative.

More concretely, let us suppose that the observable is a $p$-degree monomial in the photon number operators across the modes,
\be\label{eq:observable}
    O := \mathfrak{f}(\nv,p) = n_1^{p_1}\dots n_m^{p_m}\qquad \sum_i p_i = p \,, p_i\in \Nbb \,.
\ee
We call this a \textit{number-ordered} observable. Usually, we consider $p_i\in \{0,1\}$, so that if only one of the $p_i$'s is non-zero, then the observable amounts to an average photon number measurement, and if two $p_i$'s are non-zero it is instead photon number correlation measurement between two modes. However, we will not make any assumption other than what is stated in Eq.~\eqref{eq:observable}. Furthermore, we only consider monomials since by linearity, generalization to a polynomial is straightforward as long as we consider the monomial with the highest degree.

We stress that such observables are commonly encountered. Indeed, also observables with \textit{normal} ordering such as average photon number, second order correlation and more generally $p$-th order correlation, which are often of interest both theoretically  and experimentally~\cite{glauber1963quantum,hanbury1979test,grunwald2020nonquantum}, can be recast as linear combinations of number-ordered observables.
In particular, one can show that observables defined with the same number of creation and annihilation operators, but with different ordering, will similarly be number-ordered polynomials with same order $p$. As an example, in Appendix~\ref{appendix:gpsr_reduced2} we show that number-ordered monomials of degree $p$ are expressed as normal-ordered polynomials of the same degree. Similarly, we could express an arbitrary normal-ordered operator as a sum of number-ordered ones. In the same way, the highest degree would match for both expressions. 

First, let us make an observation that will be useful later
\begin{observation}\label{obs:one_degree_unitary}
    Consider the unitary mode transformation $U=VP_{\theta}W \in \Ubb(m)$. Then, each entry of $U$,  $U\ad$ are 1-degree polynomials in $e^{i\theta}$, $e^{-i\theta}$ respectively.
\end{observation}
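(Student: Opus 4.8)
The plan is to argue directly from the structure of the mode transformation, which is the $m\times m$ matrix acting on the vector of creation operators. We are given the factorization $U = V P_\theta W$, where $V, W \in \Ubb(m)$ are $\theta$-independent and $P_\theta$ is the single-mode phaseshifter matrix, i.e. the identity except for a single diagonal entry $e^{i\theta}$ in position $(\nu,\nu)$. First I would write $P_\theta = \Ibb + (e^{i\theta}-1)E_{\nu\nu}$, where $E_{\nu\nu}$ is the matrix unit with a $1$ in the $(\nu,\nu)$ slot and zeros elsewhere. Substituting this into the product gives $U = VW + (e^{i\theta}-1)\,V E_{\nu\nu} W$, so that every entry $U_{jk}$ has the form $(VW)_{jk} + (e^{i\theta}-1)\,V_{j\nu}W_{\nu k}$, which is manifestly an affine function of $e^{i\theta}$, hence a polynomial of degree (at most) $1$ in $e^{i\theta}$ with $\theta$-independent coefficients.

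For the claim about $U\ad$, I would simply take the conjugate transpose: since $(U\ad)_{jk} = \overline{U_{kj}}$ and $U_{kj}$ is affine in $e^{i\theta}$ with coefficients not depending on $\theta$, its complex conjugate is affine in $\overline{e^{i\theta}} = e^{-i\theta}$, again with $\theta$-independent coefficients. Equivalently, one can note $U\ad = W\ad P_\theta\ad V\ad = W\ad P_{-\theta} V\ad$ and repeat the $E_{\nu\nu}$ expansion verbatim with $e^{i\theta}$ replaced by $e^{-i\theta}$. Either route makes the degree-$1$ statement in $e^{-i\theta}$ immediate.

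There is essentially no obstacle here; the only point requiring a line of care is making explicit that $P_\theta$ differs from the identity in exactly one diagonal entry — this is precisely the content of Eq.~\eqref{eq:BS_PS} once one embeds the two-mode phaseshifter into the full $m$-mode space — so that the rank-one correction $(e^{i\theta}-1)E_{\nu\nu}$ is the only source of $\theta$-dependence. Everything else is a one-line matrix manipulation, and the observation follows.
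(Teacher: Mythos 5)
Your proof is correct and rests on the same key fact as the paper's: the phaseshifter mode transformation is the identity except for a single diagonal entry $e^{i\theta}$ at position $\nu$, so each entry of $U=VP_\theta W$ is affine in $e^{i\theta}$ (and of $U\ad$ in $e^{-i\theta}$). Your rank-one decomposition $P_\theta=\Ibb+(e^{i\theta}-1)E_{\nu\nu}$ merely makes explicit the one-line computation the paper leaves implicit.
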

This is because the unitary mode transformation of the phaseshifter is $P_\theta = \text{diag}(1,\dots,1,e^{i\theta},1,\dots,1)$ where the exponential term appears at the $\nu$-th entry.
We are now in the position to state the result which upper bounds the number of frequencies in the observable

\begin{corollary}[Observable-dependent frequency spectrum]\label{corollary:observable_frequency}
    Consider an initial state $\ket{\nv}$, a unitary mode transformation given by $U=VP_{\theta}W$ and an observable as defined in Eq.~\eqref{eq:observable}. Then, the number of frequencies of the expectation value as a function of the parameter $\theta$ will be given by
    \be
    R = \mathrm{min}\{p,n\}\\,
    \ee
    where we recall that $n$ is the total number of photons, and $p$ is the overall degree of the observable in powers of average photon number operators, each mode contributing with $p_i$ and $\sum_i p_i = p$.
\end{corollary}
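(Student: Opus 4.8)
The plan is to move to the Heisenberg picture and track the $\theta$-dependence of the evolved observable as a trigonometric polynomial in $e^{i\theta}$, exactly as in Section~\ref{sec:fourier_expansion} but exploiting the special structure of $O$. Writing $f(\theta)=\bra{\nv}\UC\ad O\UC\ket{\nv}$ with $O=\hat n_1^{p_1}\cdots\hat n_m^{p_m}$ as in Eq.~\eqref{eq:observable}, the first step is to observe that conjugation by $\UC$ is an algebra homomorphism, so $\UC\ad O\UC=\prod_{i=1}^m\big(\UC\ad\hat n_i\UC\big)^{p_i}$. Hence it suffices to bound the $\theta$-frequency support of each evolved number operator $\UC\ad\hat n_i\UC$ and then use that frequency supports add under multiplication.

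Second, I would express $\UC\ad\hat n_i\UC$ through the mode transformation $U=VP_\theta W$. With the paper's convention $\UC\ad a_i\ad\UC=\sum_j u_{ji}\,a_j\ad$, one gets $\UC\ad\hat n_i\UC=\sum_{j,k}u_{ji}\,u_{ki}^{*}\,a_j\ad a_k$. By Observation~\ref{obs:one_degree_unitary}, each $u_{ji}$ is a degree-$1$ polynomial in $e^{i\theta}$ with no negative powers, and each $u_{ki}^{*}$ a degree-$1$ polynomial in $e^{-i\theta}$; therefore every scalar coefficient $u_{ji}u_{ki}^{*}$ is a Laurent polynomial in $e^{i\theta}$ supported on the frequency set $\{-1,0,1\}$. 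Crucially the whole $\theta$-dependence sits in these scalar coefficients, which commute through the operators $a_j\ad a_k$, so operator ordering is irrelevant to the frequency count. Consequently $\UC\ad\hat n_i\UC$ is an operator-valued trigonometric polynomial of order $1$, the product $\prod_i(\UC\ad\hat n_i\UC)^{p_i}$ has order at most $\sum_i p_i=p$, and taking the $\theta$-independent expectation value $\bra{\nv}\cdot\ket{\nv}$ preserves this, giving $f(\theta)=\sum_{\omega=-p}^{p}c_\omega e^{i\omega\theta}$, i.e.\ $R\le p$. (Equivalently, one may run the argument on the normal-ordered form of $O$, as noted in the main text: $p$ creation factors each contribute $e^{i\theta}$-degree $1$ and $p$ annihilation factors each contribute $e^{-i\theta}$-degree $1$.)

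Third, I would combine this with the bound already established in Section~\ref{sec:fourier_expansion}, which is observable-independent: the spectrum of the phaseshifter generator $\hat n_\nu$ on the $n$-photon sector is $\{0,1,\dots,n\}$, so the frequency support is contained in $\{-n,\dots,n\}$ and $R\le n$ unconditionally. Together these yield $R\le\min\{p,n\}$, which is the operationally relevant statement: it reduces the number of circuit evaluations in Theorem~\ref{thm:GPSR} from $2n+1$ to $2\min\{p,n\}+1$. For the matching lower bound I would argue that the extreme coefficient $c_{\min\{p,n\}}$ is a nonzero polynomial in the entries of $V,W$ and the input occupations, hence nonvanishing for generic parameters, so the order equals $\min\{p,n\}$ outside a measure-zero set.

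The main obstacle is not conceptual but bookkeeping: one must pin down the creation-operator transformation convention so that the factors $e^{+i\theta}$ and $e^{-i\theta}$ land on the creation and annihilation pieces respectively, and must verify carefully that multiplying operator-valued trigonometric polynomials adds frequency supports (it does, since the exponentials are scalars). The only genuinely delicate point is the tightness claim; if one wants a clean statement valid for all inputs rather than generically, it is safest to phrase the result as an upper bound on the number of shifts, which is all that is used downstream.
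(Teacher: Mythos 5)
Your proposal is correct and follows essentially the same route as the paper's proof in the appendix: evolve the number operators in the Heisenberg picture via the mode transformation $U=VP_{\theta}W$, note that each $\UC\ad\hat n_i\UC=\sum_{j,k}u_{ji}u_{ki}^{*}a_j\ad a_k$ carries frequency support $\{-1,0,1\}$ by Observation~\ref{obs:one_degree_unitary}, and conclude that the $p$-fold product has support in $\{-p,\dots,p\}$, intersected with the unconditional $\{-n,\dots,n\}$ bound. Your added genericity remark about tightness of the bound is a reasonable extra that the paper does not spell out, but the substance of the argument is the same.
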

 A detailed proof is provided in Appendix~\ref{appendix:gpsr_reduced2}. The main idea behind it is realizing that the particular form of the observable makes it so that, in the expectation value, there will be a $p$-fold product of the unitary mode transformation elements. We do so by applying the unitary in a Heisenberg fashion directly to the mode operators defining the observable. By invoking Observation~\ref{obs:one_degree_unitary}, we can investigate how many frequencies we are left with, and are able to complete the proof. To illustrate the result of Corollary~\ref{corollary:observable_frequency} we provide numerics in Fig.~\ref{fig:fourier_freqs} showing the Fourier spectrum for specific examples of observables.
 
 This result can be practically very impactful. If, say, a certain task requires considering observables which contain second-order correlations between modes, for example
 \be
    \sum_{i\leq j}^m o_{ij}\hat{a}_i\ad \hat{a}_j\ad \hat{a}_i \hat{a}_j \,,
 \ee
 then Corollary~\ref{corollary:observable_frequency} ensures that the number of parameter shifts need not to scale with the number of photons, but instead four shifts will suffice. Finally, in Appendix~\ref{corollary:observable_frequency} we generalize the result to operators which, for a given mode $i$, do not contain the same number of creation and annihilation operators, e.g. $\sim (a\ad_i)^{q_i} a_i^{r_i}$. This is relevant for hermitian operators of the form
 \be
f\prod_{i=1}^m(a_i\ad)^{q_i} a^{r_i}_i +  f^*\prod_{i=1}^m(a_i\ad)^{r_i} a^{q_i}_i\,,\,f\in \Cbb \,,
\ee
i.e. operators where each term is not hermitian, but the sum is.
\begin{figure}
     \centering
     \includegraphics[width=\linewidth]{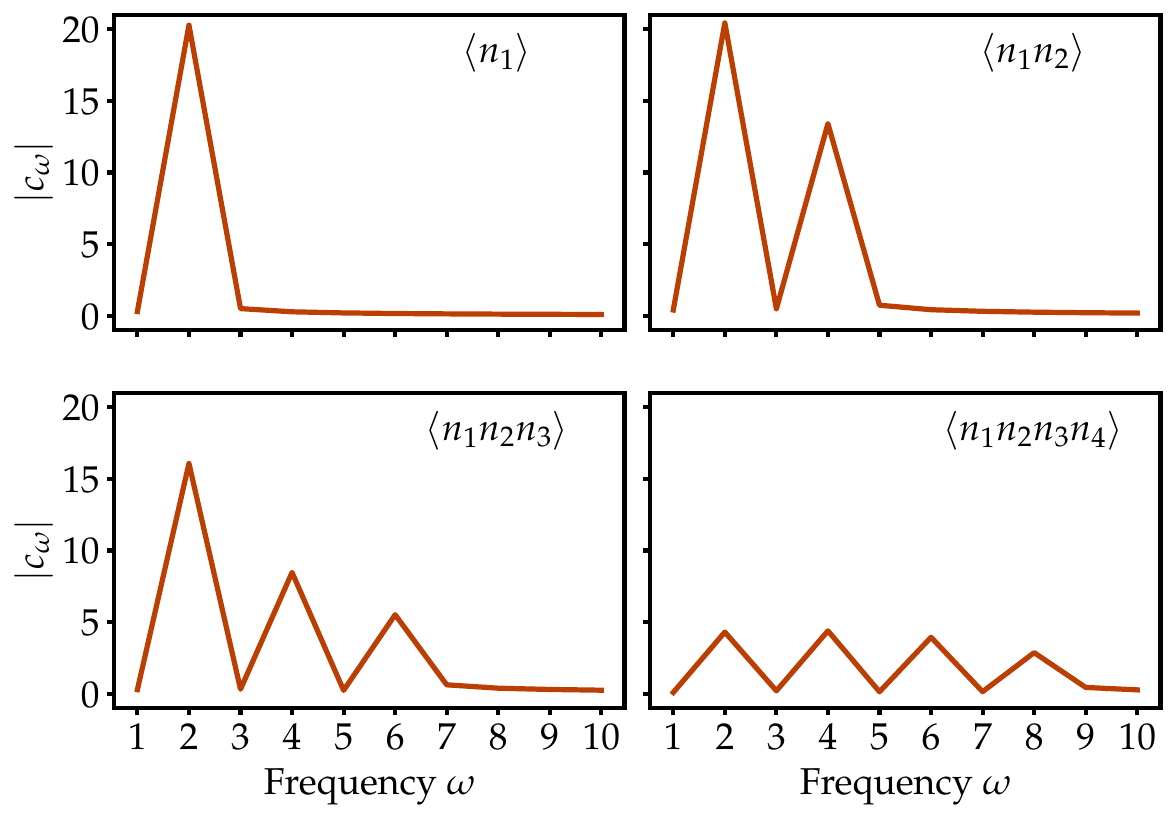}
     \caption{\textbf{Fourier spectrum of average photon number and multi-mode correlations.} For multiple observables, we plot the absolute 
    value of the coefficients $c_{\omega}$ (introduced in Section~\ref{sec:fourier_expansion}), each indicating the Fourier coefficient of the frequency $\omega$, for a linear optical circuit with input state $\ket{2,1,1,1}$. As expected, the number of non-zero coefficients is not determined by the total photon number, but rather the degree of each observable as a polynomial in the number operators.}
     \label{fig:fourier_freqs}
 \end{figure}

\section{Applications}\label{sec:applications}
In this section, we demonstrate two applications where the GPSR is employed and compared against alternative approaches to computing gradients. In the first example, we focus on a QML task that can be run in a linear optical setup, and compare using the GPSR against FD methods to calculate gradients. We show that using the GPSR can result in more reliable training compared to FD methods. In the second example, we instead show how the GPSR could be used to accurately study a Bell state generation circuit.\\

\subsection{QML classification}
As a first example, similarly to Ref.~\cite{gan2022fock}, we consider a simple binary classification task. Given some training data $\mathcal{X}=\{(\xv_i,y_i)\}_{i=1}^{N_{\text{tr}}}$, where $\xv_i\in\Rbb^q$ represents the data features and $y_i\in\{0,1\}$ the labels associated to the features, we desire our circuit to be trained so that, when prompted with $\xv_{i}$, it will be able to predict the associated label $y_i$. A suitable loss to achieve this goal is \textit{mean squared error}
\be
\label{eq:classification_loss}
\mathcal{L}(\thv,\vec{\alpha}) := \frac{1}{N_{\text{tr}}}\sum_{i}(f(\thv,\vec{\alpha},\xv_i)-y_i)^2 \,,
\ee
where $\thv$, $\alv$ are trainable parameters within the circuit and observable function. In particular, $f$ is assumed to be the normalized expectation value of a trainable observable function 
measured at the end of the PQC. 

For our classification task we used the circle dataset from the scikit-learn machine learning library~\cite{scikit_dataset}, where the data consists of $(x_1,x_2)$ coordinate pairs that are either inside or outside a circle.
Our model consists of a linear optical PQC and an observable function, both of which contain trainable parameters ($\thv$ and $\alv$ respectively). The PQC consists of 5 modes and 12 beamsplitters. Four beamsplitter encode the data, whereas the remaining 8 beamsplitter angles, denoted by $\thv$, are trainable. The circuit is always initialised with input state $\ket{\nv}=\ket{1,0,1,0,1}$. We display a schematic in Fig.~\ref{fig:QML_classification}.

We use a nonlinear observable with trainable parameters as a representative example of an observable that may be of interest for hybrid QML tasks. Our observable function is given by a trainable neural network that maps the detector outputs to a single value. The trainable observable used consists of a 5-to-4 dense linear layer, an \textit{ELU} activation function, a 4-to-1 dense linear layer and then another \textit{ELU} activation function. In total we thus have 29 trainable classical parameters (denoted by $\alv$). The output of this network is then rescaled to be within the range $[0,1]$ by a \textit{sigmoid} function.

To train our circuit parameters, we need to calculate the gradient of the loss function with respect to the $\theta_k$ parameters. Hence, we need to compute gradients of the function $f$ with respect to beamsplitter parameters. To achieve this, we use either the GPSR or FD, with the same total number of samples $\Ntot$ (600 per gradient calculation). In the case of the GPSR, the samples are allocated according to the 1-norm of the dirichlet kernels. For a single parameter update step, \textit{stochastic gradient descent} is used to optimize the $\thv$ angles with the $\alv$ parameters held constant, and then \textit{Adam} is used to optimize the classical parameters $\alv$, with $\thv$ held constant.

\begin{table}[]
    \centering
    \begin{tabular}{|c|c|c|c|c|}
    \hline
        Method & GPSR & $h=0.01$ & $h=0.1$ & $h=1$ \\
        \hline
        Accuracy [\%] & $90.25\pm0.21$ & $83.35\pm0.93$
& $90.14\pm0.25$ & $89.17\pm0.31$   \\
\hline
\end{tabular}
    \caption{\textbf{Binary classification accuracy.} We tabulate the average prediction accuracy for the 30 models trained using each method, and the standard error. We see that the GPSR and optimal FD trained models achieve the best classification accuracies.} 
    \label{tab:QML_accuracy}
\end{table}

In Fig.~\ref{fig:QML_classification}, we plot the loss function during training, averaged over thirty different random initializations. We observe that the lowest losses throughout training were achieved by models using the GPSR and FD with the heuristically optimal shift of $0.1$. The FD trained models with non-optimal shift size had higher variance in the loss curves and converged to higher loss values. All models were trained to achieve high classification accuracies (see: Tab.~\ref{tab:QML_accuracy}), and we see that both the GPSR and the optimal FD models achieved the highest accuracies, with no statistically significant differences between the two. The model accuracies achieved may be further improved with further refinement of the hyperparameters used, however these results are already sufficient to highlight a practical difference between the GPSR and FD methods with different shift sizes.
As discussed in Appendix~\ref{appendix:statistical_estimation} and observed in Fig.~\ref{fig:QML_classification}, step sizes for FD methods need to be fine-tuned depending on the task considered to achieve accurate gradient calculation; the GPSR will always be optimal. As such, the GPSR stands out from FD methods in that it can be applied to achieve good training results without prior knowledge about the problem.

\begin{figure}
    \centering
    \includegraphics[width=\linewidth]{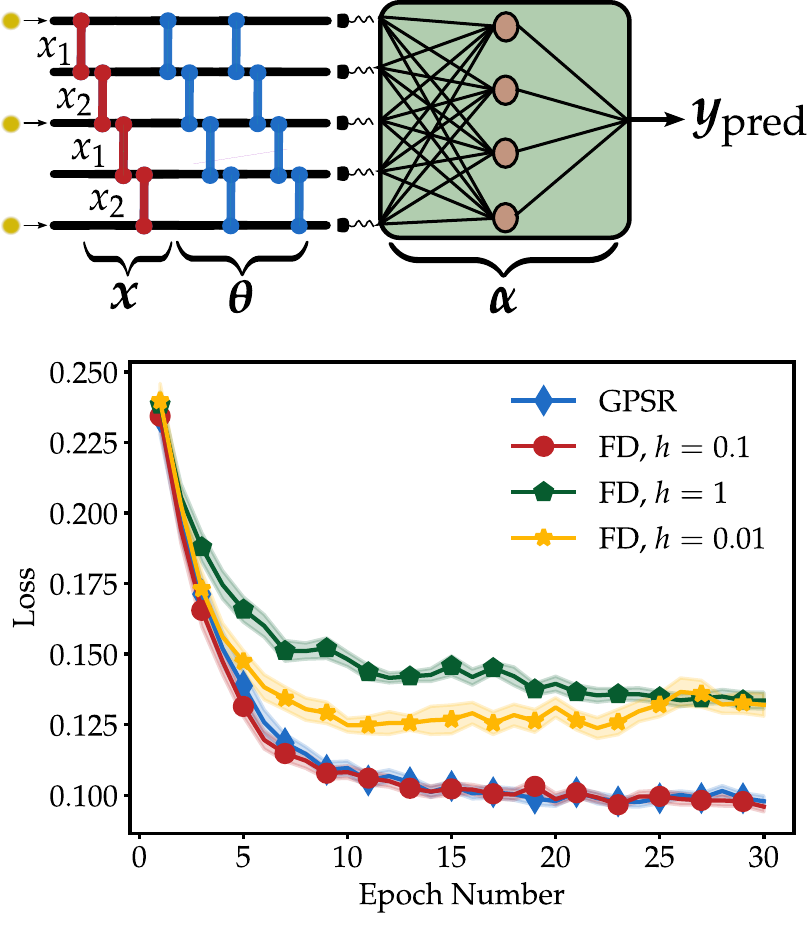}
    \caption{\textbf{Binary classfication.}
    \textit{Top:} A schematic of the variational quantum circuit used for the QML task considered here. The red beamsplitters are used to encode the datapoint features with data re-encoded once, and the blue beamsplitters have trainable parameters. The observable function (green box) consists of a trainable classical neural network. \textit{Bottom:}
    We plot the loss function defined in Eq.~\eqref{eq:classification_loss}, for the circle dataset, and for two different gradient calculation methods. In particular, we show: GPSR and FD with three different step sizes $h=0.01,0.1,1$. Based on the scaling of $h$ derived in Appendix~\ref{appendix:statistical_estimation} with $\Ntot=600$, $h=0.1$ is close to the optimal value and hence performs similarly to GPSR. However, choosing the wrong step size inhibits performance and leads to worse accuracy.}
    \label{fig:QML_classification}
\end{figure}

\subsection{Bell state generation circuit}
As another example, we study the impact of circuit parameters on a Bell state generation protocol. Qubits are commonly encoded in linear optical circuits via the \textit{dual-rail} encoding, where the states $\{\ket{0}_L, \ket{1}_L\}$ of a qubit are built from two modes and one photon via
\be
\ket{0}_L:=\ket{10}\,,\qquad \ket{1}_L:=\ket{01} \,.
\ee
Bell states are prototypical two-qubit maximally entangled states
\be
\begin{split}
\ket{\Phi^\pm} &:= \dfrac{1}{\sqrt{2}}(\ket{00}_L \pm \ket{11}_L) \,,\\
\ket{\Psi^\pm} &:= \dfrac{1}{\sqrt{2}}(\ket{01}_L \pm \ket{10}_L)\,,
\end{split}
\ee
also forming an orthonormal basis of a two-qubit Hilbert space.
However, entanglement among qubits is difficult to achieve in linear optical circuits with native operations only. Instead, many proposals rely on \textit{heralding} to artificially implement entangling, two-qubit interactions, hence bypassing the need for photon-photon interactions. On top of this, resource states such as Bell and GHZ states are central to some of the schemes proposed to realize universal LOQC~\cite{raussendorf2001oneway,raussendorf2003measurementbased,bombin2021interleaving}. 

In general, simulators provide a theoretical understanding of circuits that allow the construction of such states. However, experimentally implementing such architectures may be challenging and different sources of errors may lead to undesired behaviour. Therefore, it is also crucial to analyze how resource state generation circuits perform in practice. This can be achieved with the GPSR, which on top of being analytically exact, can also be run experimentally on real hardware.

With this in mind, we propose to study the sensitivity of a Bell state generation circuit generating the state $\ket{\Phi^+}$.
It has been shown that heralded Bell state generation requires at least four photons, where two are measured in auxiliary modes~\cite{stanisic2017generating}, so in this setting the GPSR method is crucial given the higher number of frequencies in the function $f(\cdot)$. In particular, we focus on the generation of Bell states with a six mode, four photon circuit originally proposed in Ref.~\cite{carolan2015universal} and depicted in Fig.~\ref{fig:bellstate_fidelity}. 

In this context, we define $\UC(\thv) \in \Ubb(d)$ as the circuit used for Bell state generation, and the output state $\ket{\psi(\thv)}=\UC(\thv)\ket{\nv}$ where ($\ket{\nv}=\ket{011110}$). The successful generation of the Bell state relies on measuring one photon in the third mode and one in the fourth. If this event happens, the state is then renormalized after applying the projection $\widehat{\Pi} = \Ibb_{1,2}\otimes\ket{11}\bra{11}_{3,4}\otimes \Ibb_{5,6}$
\be
    \ket{\Phi(\thv)} := \dfrac{\widehat{\Pi}\ket{\psi(\thv)}}{\norm{\widehat{\Pi}\ket{\psi(\thv)}}}
\ee
Finally, we define the observable to be the overlap with the ideal state, i.e. $O=\ket{\Phi^+}\bra{\Phi^+}$ so that
\be
 \expval{O} = \abs{\bra{\Phi^+}\ket{\Phi(\thv)}}^2 := \FC(\Phi^+) \,.
\ee
We can then take the derivative with respect to a given parameter $\theta_k$
\be
\begin{split}
\partial_k \FC(\Phi^+) &= \dfrac{\partial_k\langle(\ket{\Phi^+}\bra{\Phi^+}\otimes \widehat{\Pi})\rangle_{\thv}}{\langle\widehat{\Pi}\rangle_{\thv}} \\ 
&-\dfrac{\langle(\ket{\Phi^+}\bra{\Phi^+}\otimes \widehat{\Pi})\rangle_{\thv}}{\langle{\widehat{\Pi}\rangle_{\thv}^2}}\partial_k\langle\widehat{\Pi}\rangle_{\thv}
\end{split}
\ee
where $\expval{\cdot}_{\thv}:=\bra{\psi(\thv)}\cdot\ket{\psi(\thv)}$. In Fig.~\ref{fig:bellstate_fidelity}, we plot the different derivatives of the fidelity with respect to the parameters in the circuit, as functions of the distance away from the parameter's optimal value. Near the optimal value, we highlight that derivatives for $\theta_4,\theta_5$ are larger than the other ones, implying that such parameters have a higher impact on the overall fidelity of the state.
\begin{figure}
    \centering
    \includegraphics[width=\linewidth]{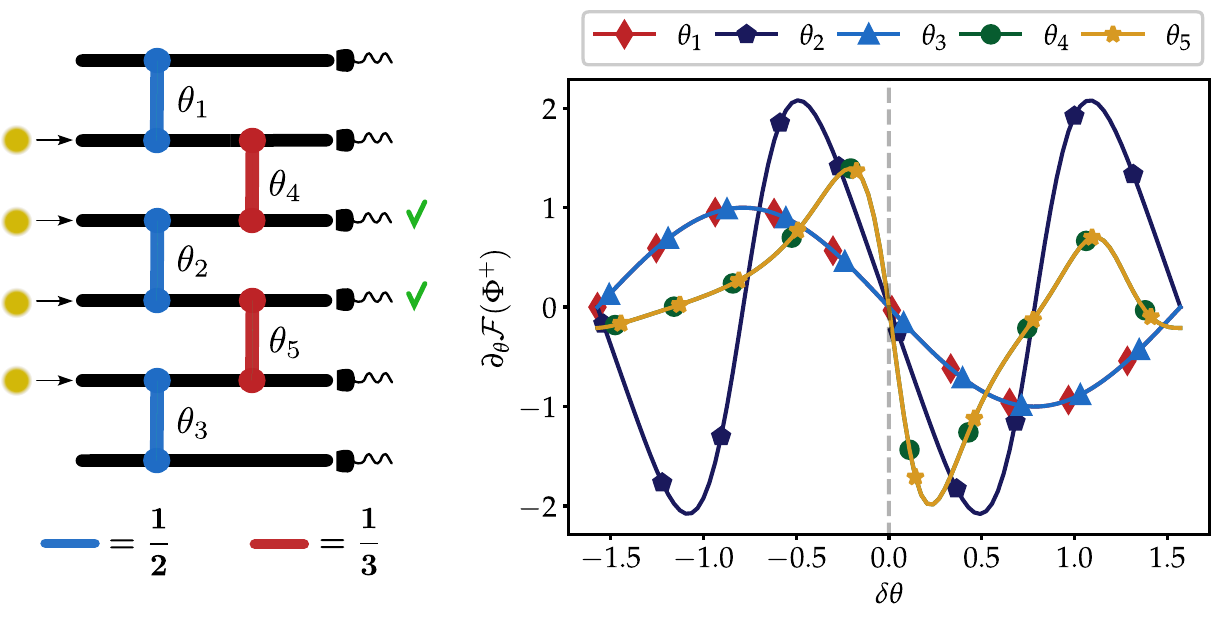}
    \caption{\textbf{Bell state generation derivatives of the fidelity.} \textit{Left:} We show the Bell state generation circuit proposed in Ref.~\cite{carolan2015universal}, where four photons are in the four central modes, interfere through five beamsplitters with optimzal values at $\cos^2{\theta_{\text{opt}}}=1/2$ (blue) and $\cos^2{\theta_{\text{opt}}}=1/3$ (red). Upon detecting two photons in the two central modes, a Bell state is created. \textit{Right:} We plot the derivative of the fidelity of the Bell state generation with respect to the parameters in the circuit, as a function of the distance away from the optimal values $\delta\theta = \theta-\theta_{\text{opt}}$.} 
    \label{fig:bellstate_fidelity}
\end{figure}

\section{Discussion}\label{sec:discussion}
Here, we discuss some of the advantages and limitations of the results presented in this work. The GPSR protocol provides an exact expression for the gradient of linear optical circuits that can be experimentally employed to reconstruct derivatives, without the need to rely on finite difference approximations. However, though we have proposed methods for reducing the number of circuit evaluations required, the linear scaling of this approach with the number of photons can still be a practical limitation for large circuits. We anticipate that this method will thus be most useful in cases where high-quality gradients are particularly important, such as in the final stages of an optimization task or for resource state generation.

We also note that, for optimization tasks, while employing the GPSR may require more samples to yield an accurate gradient, there might be a benefit in convergence speed. However, we leave as future work to explore this trade-off in more detail, as well as strategies to try and  avoid poor local minima~\cite{anschuetz2022beyond, bittel2021training} and barren plateaus~\cite{mcclean2018barren, larocca2024review} which were not discussed here. 
Possible future works may investigate how to combine the GPSR for linear optics with stochastic approaches~\cite{banchi2020training,sweke2020stochastic,kubler2020adaptive}. A more detailed analysis of the approximation capabilities of Dirichlet kernels could also bring insight on when is it appropriate to truncate the number of Fourier frequencies, and hence parameter shifts.

An important experimental consideration is the presence of imperfections such as photon loss or partial distinguishability. We anticipate that the GPSR is still valid for most common sources of imperfections.
For instance, in Appendix~\ref{appendix:lossy_GPSR} we show that the GPSR protocol holds also for interferometers with photon loss.
Similarly, linear optics with partially distinguishable photons~\cite{tichy2015sampling} was demonstrated to be equivalent to boson sampling with fewer ideal photons~\cite{renema2018efficient}, so the GPSR will hold as well.

\textit{Note added:} During the preparation of this manuscript, we became aware of recent work in Ref.~\cite{cimini2024variational} in which the authors present a similar result for a restricted version of the GPSR for circuits involving two photons, with a supplementary note deriving a generalization similar to ours.  Our work goes further by giving an in-depth analysis of the main result, including sample complexity scaling estimations and protocols to reduce the number of parameter shifts. We also acknowledge concurrent work in Refs.~\cite{pappalardo2024photonic,hoch2024variational}.

\section{Conclusion}
In this work, we showed how one can reconstruct partial derivatives of functions in linear optical circuits. To do so, we first derived how an arbitrary expectation value can be expanded into a finite Fourier series with respect to a given parameter in the circuit. Having done so, we then reconstructed the expectation value, and its derivatives, as a univariate function by evaluating the function at shifted values of the parameter. In order to reduce the required number of expectation values, we proposed two strategies that allow us to reduce the number of parameter shifts, the first one based on a light cone argument which largely depends on the specifics of the circuit, and the second focusing on the form of the observable. Finally, we showcased the use the GPSR for optimizing parameters in a PQC and to measure the impact of component imperfections in resource state generation.

More generally, we hope that this work might open up further exploration on applications with linear optics and single photons in the context of VQC. These include problems in optimization and control, chemistry, ML, unitary learning, noise characterization and quantum compilation.

\section*{Acknowledgements}
We thank Alex Martin and the rest of the team at ORCA Computing for insightful discussions and comments. Z.H. acknowledges support from the Sandoz Family Foundation-Monique de Meuron program for Academic Promotion.
This work was supported by the Quantum Data
Centre of the Future, project number 10004793, funded by the Innovate UK
Industrial Strategy Challenge Fund (ISCF).

\bibliography{biblio.bib, quantum.bib}
\newpage
\onecolumngrid
\setcounter{theorem}{0}
\setcounter{lemma}{0}
\setcounter{corollary}{0}

\appendix
\vspace{0.5in}
\begin{center}
	{\Large \bf Appendix} 
\end{center}

\section{Trigonometric Interpolation}\label{appendix:trig_interpolation}
In this Appendix we derive some key results which are useful in order to derive the main result of the GPSR rule, i.e. Theorem~\ref{thm:GPSR}, as well as other secondary results in this work. A comprehensive introduction on trigonometric interpolation can be found in Refs.~\cite{zygmund2003trigonometric,atkinson1991introduction}.

Generally, we consider a function of the form
\be\label{eq:appendix_fourier_func}
f(\theta) = \sum_{k=-R}^{R} c_k e^{ik\theta} = a_0 + \sum_{k=1}^R a_k \cos(k\theta) + b_k \sin(k\theta) \,,
\ee
where $a_{k} = (c_{k}+c_{-k})$, $b_{k} = i(c_{k}-c_{-k})$. As shown in the main text, this resembles the expectation value of an arbitrary observable in linear optics, as a function of a given parameter $\theta$.
For completeness, we give here a proof for the result of Lemma~\ref{lemma:trig_interp}. Let us recall the claim
\begin{lemma}[Trigonometric Interpolation, Ref.~\cite{atkinson1991introduction}]\label{appendix_lemma:trig_interp}
Suppose a function of the form as in Eq.~\eqref{eq:appendix_fourier_func}. Suppose furthermore that we are given the values of $f$ at $2R+1$ equidistant points in a $2\pi$ interval, e.g. we have knowledge of $\{(\theta_\mu,f(\theta_\mu))\, |\, \theta_\mu=2\pi \mu/(2R+1),\, \mu=-R,\dots,R\}$. Then one can show that
\be
    c_k = \dfrac{1}{2R+1}\sum_{\mu=-R}^{R} e^{-ik\theta_\mu}f(\theta_\mu)\hspace{2pt} \forall\, \mu=-R,\dots,R
\ee
\end{lemma}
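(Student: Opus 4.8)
The plan is to verify directly that the proposed formula for $c_k$ inverts the Fourier sum, using the discrete orthogonality of the exponentials $e^{ik\theta_\mu}$ over the $2R+1$ equidistant nodes. First I would substitute the Fourier representation $f(\theta_\mu)=\sum_{j=-R}^{R}c_j e^{ij\theta_\mu}$ into the right-hand side of the claimed identity, obtaining
\be
\dfrac{1}{2R+1}\sum_{\mu=-R}^{R} e^{-ik\theta_\mu}f(\theta_\mu) = \dfrac{1}{2R+1}\sum_{j=-R}^{R}c_j\sum_{\mu=-R}^{R} e^{i(j-k)\theta_\mu} \,.
\ee
The task then reduces to evaluating the inner sum $S_{j-k}:=\sum_{\mu=-R}^{R} e^{i(j-k)\theta_\mu}$ with $\theta_\mu = 2\pi\mu/(2R+1)$.

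The key step is the discrete orthogonality relation: for an integer $\ell$ with $|\ell|\le 2R$, the sum $S_\ell = \sum_{\mu=-R}^{R} e^{2\pi i\ell\mu/(2R+1)}$ equals $2R+1$ when $\ell=0$ and vanishes otherwise. This follows by recognizing $S_\ell$ as a geometric series with ratio $z=e^{2\pi i\ell/(2R+1)}$: when $\ell\not\equiv 0 \pmod{2R+1}$ we have $z\neq 1$ and $z^{2R+1}=1$, so the sum telescopes to $(z^{R+1}-z^{-R})/(z-1)=0$; when $\ell\equiv 0$ every term is $1$. Since $j,k\in\{-R,\dots,R\}$, we have $|j-k|\le 2R<2R+1$, so $j-k\equiv 0\pmod{2R+1}$ forces $j=k$. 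Hence $S_{j-k}=(2R+1)\,\delta_{jk}$, and the double sum collapses to $c_k$, proving the claim.

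I do not expect a genuine obstacle here — the result is classical (and the excerpt cites Atkinson) — but the one place to be careful is the range condition: the argument that orthogonality picks out exactly $j=k$ relies on using precisely $2R+1$ nodes for a spectrum of $2R+1$ frequencies, so that no aliasing occurs. If one used fewer nodes, distinct frequencies would become indistinguishable on the grid; this is exactly why $2R+1$ is both necessary and sufficient, and it is worth remarking on since the same counting underlies the sharpness of Theorem~\ref{thm:GPSR}. The remaining steps (the geometric-series evaluation and the final substitution) are routine.
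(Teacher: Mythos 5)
Your proof is correct and follows essentially the same route as the paper's: both rest on the discrete orthogonality relation $\sum_{\mu=-R}^{R}e^{i\ell\theta_\mu}=(2R+1)\delta_{\ell,0}$ for $|\ell|\le 2R$, established via the geometric series, with the only cosmetic difference being that you substitute the Fourier expansion into the claimed formula while the paper multiplies the interpolation equations by $e^{-ij\theta_\mu}$ and sums. Your explicit remark on the no-aliasing condition $|j-k|\le 2R<2R+1$ is a worthwhile clarification but does not change the argument.
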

\begin{proof}
First of all, note that 
\be
    \sum_{\mu=-R}^R e^{i\theta_\mu} = \sum_{\mu=0}^{2R} e^{i\theta_\mu} = \sum_{\mu=0}^{2R} z^{\mu} \,,
\ee
with $z = e^{i2\pi/(2R+1)}$. For $z\neq 1$, by the geometric series identity this is equal to
\be
\sum_{\mu=0}^{2R} z^{\mu} = \dfrac{z^{2R+1}-1}{z-1} = \dfrac{e^{i2\pi}-1}{e^{i2\pi/(2R+1)} -1} = 0 \,.
\ee
For $z=1$, we easily verify that instead the sum amounts to $2R+1$. Let us now look at the set of interpolating points equations
\be
    \sum_{k=-R}^R c_k e^{ik\theta_\mu} = f(\theta_\mu)\,,\qquad \mu=-R,\dots,R \,.
\ee
Multiplying each term in the sum by $e^{-ij\theta_\mu}$ and summing over $\mu=-R,\dots,R$
\be\label{eq:trig_interpolation_proof}
    \sum_{\mu=-R}^R \sum_{k=-R}^R c_k e^{i(k-j)\theta_\mu} = \sum_{\mu=-R}^R e^{-ij\theta_\mu}f(\theta_\mu)\,.
\ee
Reversing the order of summation on the LHS of the equation,  we get
\be
    \sum_{k=-R}^R c_k \sum_{\mu=-R}^R e^{i(k-j)\theta_\mu} = \sum_{k=-R}^R c_k \delta_{kj}(2R+1) = c_j (2R+1) \,.
\ee
Inserting this result in Eq.~\eqref{eq:trig_interpolation_proof}, we finally complete the proof
\be
    c_j = \dfrac{1}{2R+1}\sum_{\mu=1}e^{-ij\theta_\mu}f(\theta_\mu) \,.
\ee
\end{proof}

Consider the \textit{Dirichlet kernel}, defined as
\begin{align}
    D(x) &=\dfrac{1}{2R+1}\sum_{k=-R}^R e^{ikx}
    = \dfrac{1}{2R+1}\bigg(1+\sum_{k=1}^R e^{ikx}+e^{-ikx}\bigg) \\
    &= \dfrac{1}{2R+1}\bigg(1+2\sum_{k=1}^R \cos{(kx)}\bigg)\,, \qquad D(n\pi) = 1 \,\forall n\in \Nbb \,,
\end{align}
which can also be brought to the form in Eq.~\eqref{eq:def_dirichlet_kernel}. Then, inserting the result of Lemma~\ref{appendix_lemma:trig_interp} into the definition of $f(\theta)$ in Eq.~\eqref{eq:appendix_fourier_func}, then one can show that the function is expressed as a sum of Dirichlet kernels
\be
    f(\theta)= \sum_{\mu=-R}^R f(\theta_\mu)D(\theta-\theta_\mu) 
\ee
From here, the main result of this work is derived.

We also evaluate the derivatives and their values at $x=n\pi$, $n\in\Nbb$, as it will be useful for Appendix~\ref{appendix:statistical_estimation}. In fact
\be
D'(x) = -\dfrac{2}{2R+1}\sum_{k=1}^R k\sin{(kx)}\,, \qquad D'(n\pi) = 0 \,.
\ee
The second derivative is given by
\be
D''(x) = -\dfrac{2}{2R+1}\sum_{k=1}^R k^2\cos{(kx)}\,.
\ee
Let us evaluate the second derivative at the points $x=2n\pi$, $n\in \Nbb$
\be\label{eq:D''_at_2npi}
D''(2n\pi) = -\dfrac{2}{2R+1}\sum_{k=1}^R k^2 = -\dfrac{1}{3}R(R+1)
\ee
and for $x=(2n+1)\pi$, $n\in \Nbb$ (for completeness)
\be
D''((2n+1)\pi) = -\dfrac{2}{2R+1}\sum_{k=1}^R (-1)^k k^2 = \dfrac{(-1)^{R+1} 2R(R+1)}{2R+1}
\ee

\subsection{First order derivative reconstruction} Here we loosely follow the derivation of Ref.~\cite{zygmund2003trigonometric}. An alternative derivation is provided in Ref.~\cite{wierichs2022general} where they first focus on reconstructing odd functions, and then derive the result for the first-order derivative of an arbitrary function defined from a qubit-based quantum circuit. Similarly to the general case, we consider the task of reconstructing a function of the form
\be
f(\theta) = \dfrac{1}{2}a_0 + \sum_{k=1}^R a_k \cos(k\theta) + b_k \sin(k\theta) \,.
\ee
Given $2R$ points, the system will be under-determined since for the full reconstruction we would need $2R+1$ points to resolve for the coefficients $\{a_0,a_1,b_1,\dots,a_R,b_R\}$. However, $2R$ are enough to determine the first-order derivative.

Consider the \textit{modified Dirichlet} kernel
\be
    D^*(\theta) = \dfrac{1}{2R}+\dfrac{1}{2R}\cos(R\theta)+\sum_{\mu=1}^{R-1}\cos(k\theta) = \dfrac{\sin(R\theta)}{2R\tan(\theta/2)}
\ee
Consider the points $\{\theta_\mu=\dfrac{(2\mu-1)\pi}{2R}\,,\,\mu=1,\dots,2R\}$. Then, the modified Dirichlet kernel satisfies $D^*(\theta_{\mu}-\theta_{\mu'}) = \delta_{\mu\mu'}$. It can then be shown that
\be
    f(\theta) = a_n\cos(R\theta) + \sum_{\mu=1}^{2R}f(\theta_\mu) D^*(\theta-\theta_\mu)
\ee
We then find that
\be
f'(0) = \sum_{\mu=1}^{2R} f(\theta_\mu)(D^*)'(-\theta_\mu) = \sum_{\mu=1}^{2R} f(\theta_\mu)\dfrac{(-1)^{\mu+1}}{4R\sin^2(\theta_\mu /2)}
\ee
From which we can deduce
\be
    f'(\theta) = \sum_{\mu=1}^{2R}f(\theta+\theta_\mu)\dfrac{(-1)^{\mu+1}}{4R\sin^2(\theta_\mu /2)}
\ee
While this has a similar form to the main result with the standard Dirichlet kernels, it turns out that this formulation achieves a better scaling in terms of number of samples. To see this in more detail, the next section will be concerned with the shot budget required to reconstruct the first-order derivative within a desired additive error.

\subsection{Presence of noise in the system}\label{appendix:lossy_GPSR}
In this section, we discuss the presence of noise in the system. Experimental realizations of linear optical circuits are in fact faced with photon loss in the system. For a system with $n$ photons, the architecture has some probability of losing a certain amount of photons, a probability which generally depends on the specifics of the platform. For uniform loss with transmission coefficient $\eta$, loss commutes with the unitary $\UC(\thv)$ and therefore can be applied directly to the initial state $\rho_{\nv}:=\ket{\nv}\bra{\nv}$~\cite{garciapatron2019simulating}. We define the transformation where a photon in a given mode escapes and is injected into the environment
\be
a\ad \mapsto \sqrt{\eta}a\ad +\sqrt{1-\eta}e\ad
\ee
where $e\ad$ is the creation operator of the environment. The action on an $n$-photon state is given by
\be
\begin{split}
\dfrac{1}{\sqrt{n!}}(a\ad)^n\ket{0}_a\otimes\ket{0}_e &\mapsto \dfrac{1}{\sqrt{n!}}\bigg(\sqrt{\eta}a\ad + \sqrt{1-\eta}e\ad\bigg)^n\ket{0,0} \\
&=\dfrac{1}{\sqrt{n!}}\sum_{k=0}^n\binom{n}{k}\eta^{k/2}(1-\eta)^{\frac{n-k}{2}}\sqrt{k!}\sqrt{(n-k)!}\ket{k,n-k} \\
&=\sum_{k=0}^n\binom{n}{k}^{1/2}\eta^{k/2}(1-\eta)^{\frac{n-k}{2}}\ket{k,n-k} \,.
\end{split}
\ee
Tracing out the environment, we get that the pure loss channel acts on an $n$-photon state $\rho_n=\ket{n}\bra{n}$ as
\be
\Lambda_{\eta}(\rho_n) := \sum_{k=0}^n\binom{n}{k}\eta^k(1-\eta)^{n-k} \rho_{k} = \sum_{k=0}^n p^{\eta}_{n,k} \rho_{k} \,,
\ee
where we deonte by $p^{\eta}_{n,k}$ the probability of being in state $\rho_k:=\ket{k}\bra{k}$.
Now, the same channel acting on $m$ modes gives
\be
\Lambda_{\eta}^{\otimes m}(\rho_{\nv})= \bigotimes_{i=1}^m \bigg(\sum_{k_i=0}^{n_i} p^{\eta}_{n_i,k_i} \rho_{k_i} \bigg) = \sum_{k_1=0}^{n_1}\dots\sum_{k_m=0}^{n_m}\bigg(\prod_{i=1}^m p^{\eta}_{n_i,k_i}\bigg)\ket{\kv}\bra{\kv} =
\sum_{\vec{k}} p^{\eta}_{\nv,\kv}\ket{\kv}\bra{\kv} \,,
\ee
where $p^{\eta}_{\nv,\kv}$ denotes the product in the parentheses, and the final sum runs over all $\ket{\kv}$ Fock states such that $k_i \leq n_i$.
The expectation value of an observable $O$ measured with a parametrized, noisy device will be
\be
\begin{split}
\expval{O}_{\eta} &= \Tr[\UC(\thv)\Lambda_{\eta}^{\otimes m}(\rho_{\nv})\UC\ad(\thv)O] \\
& =  \sum_{\kv} p^{\eta}_{\nv,\kv} \Tr[\UC(\theta)\rho_{\kv}\UC\ad(\theta)O] \,,
\end{split}
\ee
where $\expval{\cdot}_{\eta}$ indicates the expectation value under noise $\eta$. Importantly, each $\Tr[\cdot]$-term in the sum is the expectation value of the same observable but with a $k$-photon input state. By applying similar arguments to the main derivation, it is then possible to show that each single $\Tr[\cdot]$-term can be expanded as
\be
    \Tr[\UC(\theta)\rho_{\kv}\UC\ad(\theta)O] = \sum_{\omega=-k}^k c_{\omega,k}e^{i\omega\theta}=
    \sum_{\omega=-n}^n c_{\omega,k}e^{i\omega\theta} \,,
\ee
where the $c_{\omega,k}$'s terms, similarly to the lossless case, depend on the unitary and the observable. In the second equality we extended the sum by simply considering $c_{\omega,k}\equiv 0$ for $|\omega| > k$. The expectation value then takes the form
\be
\begin{split}
    \expval{O}_{\eta} = \sum_{\omega=-n}^n\bigg(\sum_{k=0}^n c_{\omega,k}\binom{n}{k}\eta^k(1-\eta)^{n-k} \bigg) e^{i\omega\theta}
    =\sum_{\omega=-n}^n c_{\omega,\eta} e^{i\omega\theta} \,.
\end{split}    
\ee
This form of the expectation value shows that all the results derived for the ideal case also hold in the case of circuits with uniform loss.

In the case of mode-dependent losses, the dynamics becomes more complex as the losses cannot commute and be shifted to the beginning (or end) of the circuit. However, it is still reasonable to assume that, in practice, we end up with the same Fourier spectrum. In fact, very generally, we can consider the unitary dilation of a lossy linear optics circuit, where the final state of the $m$ modes is given by
\be\label{eq:def_loss}
    \rho_{\text{out}} = \Tr_{E}[\widetilde{\UC}(\theta,\vec{\eta})\ad(\rho_{\nv}\otimes \ket{\vec{0}}\bra{\vec{0}}_{E})\widetilde{\UC}(\theta,\vec{\eta})]
\ee
where $E$ denotes the entire environment system, and $\widetilde{\UC}(\theta,\vec{\eta})$ is a unitary specified from the ideal circuit, as well as additional beamsplitters coupling the $m$ modes and the environment (which act in the same way as Eq.~\eqref{eq:def_loss}, with distinct parameters $\eta_i$). We denote $\vec{\eta}=(\eta_1,...\eta_{m})$ the loss parameters. An observable measured on the $m$ modes yields the following expectation value
\be
\expval{O} = \Tr_{m}[\rho_{\text{out}}O] = \Tr[\widetilde{\UC}(\theta,\vec{\eta})\ad(\rho_{\nv}\otimes \ket{\vec{0}}\bra{\vec{0}}_{E})\widetilde{\UC}(\theta,\vec{\eta}) (O\otimes\Ibb_{E})] 
\ee
While needing to consider an enlarged space containing environment modes, and more complex dynamics which include the loss events, the state $\rho_{\nv}\otimes \ket{\vec{0}}\bra{\vec{0}}_{E}$ is still a $n$-photon state and the unitary can still be cast so that the parametrized operation can be singled out and expanded to achieve a spectrum $\{-n,\dots,n\}$.

\section{Statistical estimation and accuracy}\label{appendix:statistical_estimation}
We will now study the accuracy of different gradient estimators that can be used in parametrized quantum circuits in linear optical systems. As a first step, an important assumption we make is that the physical variance $\sigma^2$ does not depend on the parameters $\thv$. Alternatively, we can think of $\sigma^2$ as the variance of $O$ maximised over the parameter space (and what follows can be thought of as a worst-case scenario).

Generally, an estimator $\widehat{\EC}_{O}$ of the expectation value of an observable $\expval{O}$ can be biased such that
\be
\widehat{\EC}_{O} = \expval{O} +\widehat{b} \,,\qquad \Ebb[\widehat{\EC}_{O}] = \expval{O} +b \,.
\ee
As an example, the GPSR in the infinte sampling regime is an exact result and hence its estimator will be unbiased. A finite difference approach, instead, will generally not be a unbiased estimator and will have a non-zero $b$ depending on the resolution size. We will show this in more detail in section~\ref{appendix:finite_difference}. 
One of the key figures of merit one can consider when estimating the distance away of an estimator from the desired value, is the mean-squared error (MSE)
\be
\begin{split}
\mse{\widehat{\EC}_O} &= \Ebb[(\widehat{\EC}_O-\expval{O})^2] = \Ebb[\widehat{\EC}^2_O]-2\Ebb[\widehat{\EC}_O]\expval{O} + \expval{O}^2 \\
&= \Var[\widehat{\EC}_O] +b^2 \,,
\end{split}
\ee
where in the last equality we added and subtracted the squared of the expectation value of $\widehat{\EC}_O$ to get the result. Hence, when considering the accuracy of estimators, there are in general two sources of errors: the bias $b$ represents a constant error that persists even when the number of samples $\Ntot\rightarrow\infty$, while the variance is caused by the statistical fluctuations intrinsic to the measurement process of quantum system.

\subsection{Generalized parameter-shift rule}\label{appendix:gpsr}
Let us derive how many samples per expectation value $N_s$, we need to get an estimate within distance $\epsilon$ from the actual value of the derivative.
First of all, we compute the physical variance of the trigonometric-interpolated derivative\footnote{To differentiate GPSR and finite difference, will use subscript to specify the strategy used.}
\be
    \Var[f'_{\text{GPSR}}] =\Var\bigg[\sum_{\mu=1}^R(f(\theta+\theta_\mu)-f(\theta-\theta_\mu))\dfrac{(-1)^{\mu+1}}{2\sin{(\theta_\mu / 2)}}\bigg] 
    =\sigma^2 \sum_{\mu=1}^R \dfrac{1}{2\sin^2{(\theta_\mu/2)}} \,,
\ee
where we assumed that $f(\theta + \theta_\mu)$ and $f(\theta - \theta_\mu)$ are uncorrelated. Note that the derivative does depend on $\theta$, we dropped the explicit dependence for notational convenience. Hence, in order to make sense of this quantity, we need to compute the sum over $\mu$. To do this, note that using the main GPSR result (Eq.~\eqref{eq:gpsr}) for $f(\theta)=D'(\theta)$ evaluated at $\theta=0$ yields
\be
D''(0) = \sum_{\mu=1}^R (D'(\theta_\mu) -D'(-\theta_\mu))\dfrac{(-1)^{\mu+1}}{2\sin{(\theta_\mu / 2)}} = \sum_{\mu=1}^R 2D'(\theta_\mu)\dfrac{(-1)^{\mu+1}}{2\sin{(\theta_\mu / 2)}} = - \sum_{\mu=1}^R \dfrac{1}{2\sin^2{(\theta_\mu/2)}}
\ee
Using the result of Eq.~\eqref{eq:D''_at_2npi}, we see that
\be
    \Var[f'_{\text{GPSR}}] = \dfrac{\sigma^2 R(R+1)}{3}
\ee
We are now in the position to determine the number of samples required to estimate the gradient within accuracy $\epsilon$. In order to estimate the function at the shifted values $f(\theta\pm\theta_{\mu})$ given $N_s$ measurements, we use the unbiased estimator
\be\label{eq:shifted_function_estimator}
\widehat{f}_{\theta\pm\theta_\mu} = \dfrac{1}{N_s}\sum_{s=1}^{N_S} \lambda_{\nv(s)} \,, \qquad \Ebb[\widehat{f}_{\theta\pm\theta_\mu}] =f(\theta\pm\theta_\mu) \,, \,\Var[\widehat{f}_{\theta\pm\theta_\mu}] = \dfrac{\sigma^2}{N_s}
\ee
where $\lambda_{\nv(s)}$ is the eigenvalue of $O$ associated to outcome state $\nv(s)$ measured at the $s$-th measurement. From this, we can then define the estimator for the gradient
\be
    \widehat{f'}_{\text{GPSR}} = \sum_{\mu=1}^R (\widehat{f}_{\theta+\theta_\mu} - \widehat{f}_{\theta-\theta_\mu})\dfrac{(-1)^{\mu+1}}{2\sin{(\theta_\mu/2)}} \,, \qquad \Ebb[\widehat{f'}_{\text{GPSR}}] =f'(\theta) \,, \,\Var[\widehat{f'}_{\text{GPSR}}] = \dfrac{\sigma^2R(R+1)}{3N_s} \,,
\ee
where to compute the variance, we again assumed that the estimators $\widehat{f}_{\theta + \theta_\mu}$ and $\widehat{f}_{\theta - \theta_\mu}$ contain samples which are uncorrelated. Now, note that for such an unbiased estimator the MSE is given by
\be
\mse{\widehat{f'}_{\text{GPSR}}} = \dfrac{\sigma^2R(R+1)}{3N_s}\,.
\ee
Then, if we desire the MSE to be within a certain accuracy $\epsilon^2$, we will require
\be
    \mse{\widehat{f'}_{\text{GPSR}}} \leq \epsilon^2 \implies N_s \in \OC\left(\dfrac{\sigma^2 R(R+1)}{3 \epsilon^2} \right)
\ee
yielding a quadratic scaling with respect to the number of photons in the general case of the GPSR. In terms of the total number of calls to any given circuit, i.e. $\Ntot=2RN_s$, we instead have a cubic scaling
\be
\Ntot \in \OC\bigg(\dfrac{2\sigma^2 R^2(R+1)}{3 \epsilon^2} \bigg)
\ee
In general, this scaling of the sample complexity is not optimal for the goal of computing derivatives, and more efficient strategies could be employed. 

In particular, one could consider reconstruction of the derivative by distributing the total number of samples $N_\text{tot}$ according to the weight of each term in the sum of the GPSR. Then, the number of samples for each estimator $\hat{f}_{\theta\pm\theta_{\mu}}$ defined in Eq.~\eqref{eq:shifted_function_estimator} will be
\be
N_{s,\mu} = \frac{N_\text{tot} |y_\mu|}{2\norm{\vec{y}}_1}\,,\qquad \vec{y} = \dfrac{1}{2}\bigg(\dfrac{1}{\sin(\theta_1 /2)},\dots, \dfrac{(-1)^{n+1}}{\sin(\theta_n /2)}\bigg) \,,
\ee
where $\vec{y}$ is the vector containing the weights given by the derivatives of the Dirichlet kernel and $\norm{\cdot}_1$ is the 1-norm. For a vector $\vec{v}=(v_1,\dots,v_R)$, it is defined as $\norm{\vec{v}}_1 = \sum_{\mu=1}^R |v_\mu|$
In this case, the variance of the estimator $\widehat{f'}_{\text{GPSR}}$ is
\be
\begin{split}
    \Var[\widehat{f'}_{\text{GPSR}}] &= 
    \sum_{\mu=1}^R (\Var[\widehat{f}_{\theta+\theta_\mu}] + \Var[\widehat{f}_{\theta-\theta_\mu}])y_\mu^2 \\
    &= \sum_{\mu=1}^R \dfrac{4\sigma^2\norm{\vec{y}}_1}{\Ntot|y_\mu|}y_\mu^2 = \dfrac{4\sigma^2\norm{\vec{y}}_1}{\Ntot}\sum_{\mu=1}^R|y_\mu| \\
    &= \dfrac{4\sigma^2\norm{\vec{y}}^2_1}{\Ntot} \,,
    \end{split}
\ee
To evaluate the variance, we need to compute the 1-norm of th vector $\vec{y}$. This is very difficult, if not impossible, in practice. However, Ref.~\cite{blagouchine2024finite} showed a particularly tight approximation to this kind of sum. In particular, it can be shown that 
\be
\begin{split}
\norm{\vec{y}}_1 &= \sum_{\mu=1}^R \dfrac{1}{2\sin{\bigg(\dfrac{\pi\mu}{2R+1}\bigg)}} = \dfrac{1}{4}\sum_{\mu=1}^{2R} \dfrac{1}{\sin{\bigg(\dfrac{\pi\mu}{2R+1}\bigg)}} \\
&= \dfrac{1}{2}\dfrac{2R+1}{\pi}\bigg(\ln{\bigg(\dfrac{2(2R+1)}{\pi}\bigg)} + \gamma\bigg) + \OC(1) \\
&\sim \dfrac{1}{2} R\ln{(R)} + \OC(R) \,.
\end{split}
\ee
In this case, we are then able to achieve a more favourable scaling with respect to $R$
\be\label{eq:scaling_1norm_dirichlet}
\mse{\widehat{f'}_{\text{GPSR}}} \leq \epsilon^2 \Longrightarrow \Ntot \in \OC\bigg(\dfrac{\sigma^2 R^2 \ln^2(R)}{\epsilon^2}\bigg) \,.
\ee
\textbf{GPSR with modified Dirichlet kernels} Furthermore, one can consider reconstruction of the derivative via the modified Dirichlet kernels. A full derivation is provided in Appendix A.4 of Ref.~\cite{wierichs2022general} and we report here the scaling achieved when distributing the samples $\Ntot$ according to the $1$-norm 
\be
    \Ntot \in \OC\bigg(\dfrac{\sigma^2 R^2}{\epsilon^2}\bigg) \,,
\ee
hence obtaining a further $\ln^2(R)$ saving with respect to the the result in Eq.~\eqref{eq:scaling_1norm_dirichlet}.
\subsection{Finite difference}\label{appendix:finite_difference}
In this section, we follow the derivation of Ref.~\cite{mari2021estimating} to estimate the optimal shot budget for a finite difference (FD) approach. We recall the definition of the FD
\be
f'_{\text{FD}} = \dfrac{f(\theta+\delta)-f(\theta-\delta)}{2\delta} \simeq f'(\theta) \,,
\ee
where the approximate equality is true as long as $\delta$ is small enough, and we again dropped the $\theta$ dependence for notational convenience. Assuming again that $\sigma^2$ does not depend on $\theta$, we have $\Var[f'_{\text{FD}}]=\sigma^2/2\delta^2$. This implies that for an estimator allocating $N_s$ samples for each of the two terms in the numerator of the FD, we will have
\be
\Var[\widehat{f'}_{\text{FD}}] = \dfrac{\sigma^2}{2\delta^2 N_s} \,.
\ee
We now compute the bias
\be
\begin{split}
b &= \Ebb[\widehat{f'}_\text{FD}] -f'(\theta) = \dfrac{f(\theta+\delta)-f(\theta-\delta)}{2\delta} - f'(\theta) \\
&= \dfrac{f(\theta)+f'(\theta)\delta + \frac{1}{2!}f''(\theta)\delta^2 + \frac{1}{3!}f'''(\theta)\delta^3-[f(\theta)-f'(\theta)\delta + \frac{1}{2!}f''(\theta)\delta^2 - \frac{1}{3!}f'''(\theta)\delta^3]+ \OC(\delta^5)}{2\delta} -f'(\theta) \\
&=\dfrac{f'''(\theta)\delta^2}{3!} +\OC(\delta^4)\,,
\end{split}
\ee
where the Taylor expansion of the two terms is valid in the regime $\delta\ll 1$. The MSE is then
\be
\mse{\widehat{f'}_\text{FD}} = \dfrac{f'''(\theta)^2\delta^4}{(3!)^2} + \dfrac{\sigma^2}{2\delta^2 N_s}
\ee
The optimal step size $\delta$ minimizing the MSE is then
\be
\delta_{\text{opt}} = \bigg[\dfrac{9\sigma^2}{f'''(\theta)^2 N_s}\bigg]^{1/6}\,.
\ee
So the minimized MSE will be
\be
    \mse{\widehat{f'}_{FD}} = \dfrac{3}{4}\bigg(\dfrac{f'''(\theta)^2\sigma^4}{9 N_s^2}\bigg)^{1/3}
\ee
Again, for an accuracy $\epsilon^2$, the sample complexity will be
\be
N_s \in \OC\left(\dfrac{\sqrt{3}\abs{f'''(\theta)}\sigma^2}{8\epsilon^3}\right) \xRightarrow{N_\text{tot}=2N_s} N_\text{tot} \in \OC\left(\dfrac{\sqrt{3}\abs{f'''(\theta)}\sigma^2}{4\epsilon^3}\right) 
\ee

Given we need knowledge of the third order derivative, it is in general difficult to exactly determine the required number of samples. Moreover, the result holds as long as the Taylor expansion is valid, i.e. $\delta \ll 1$. However, we remark that while $\Ntot$ contains a cubic scaling with respect to  $\epsilon^{-3}$ scaling.

\section{Simplifications to the GPSR}
\subsection{Number of shifts depends on the first unitary}\label{appendix:gpsr_reduced1}

 Depending on the position of the parameter with respect to which we want to compute the gradient, it may be possible to reduce the number of shifts in the GPSR formula. We formalise this by considering that the unitary $\WC$ preceding the phase $\theta$ in mode $\nu$ might not act on the modes altogether, but there might exist subsets of the modes which do not interact. To be more clear, we suppose $\WC = \WC_A\otimes \WC_B$, where $A \cap B =\emptyset$. The modes and photons in region $A$ ($B$) are $m_A$ ($m_B$) and $n_A$ ($n_B$), respectively. Furthermore, we suppose that the mode $\nu$ onto which the parametrized phase is applied is contain in region $A$. A drawing of the setup is provided in Fig.~\ref{fig:resource_savings}. We then have the following result

\begin{corollary}[Light cone bound on the number of shifts]
    Given the same assumptions of Theorem~\ref{thm:GPSR}, and additionally supposing that the size of the light cone generated by $\WC\ad$ is $m_A$, such that $\WC=\WC_A\otimes \WC_B \in \Ubb(d)$, and we recall that $\PC_{\theta}$ is the unitary representing a phase applied to mode $\nu \subset A$, so it holds that $\PC_{\theta}\rightarrow \PC_{\theta} \otimes\Ibb_B$. We then have that the GPSR is reduced to
\be
f^{(k)}(\theta) = \sum_{\mu=-n_A}^{n_A}f(\theta + \theta_\mu)D^{(k)}(-\theta_\mu) \,,
\ee
where $\theta_\mu = \frac{2\pi\mu}{2n_A+1}$, and $n_A$ is the number of photons contained in region $A$ in the initial state. Similarly, the result of Corollary~\ref{corollary:GPSR_odd} reduces to
\be
f'(\theta) = \sum_{\mu=1}^{2n_A} f(\theta + \theta_\mu)\dfrac{(-1)^{\mu+1}}{4R\sin^2(\theta_\mu/2)} \,,
\ee
where $\theta_\mu = \frac{(2\mu-1)\pi}{2n_A}$.
\end{corollary}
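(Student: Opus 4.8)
The plan is to revisit the Fourier expansion derived in Section~\ref{sec:fourier_expansion}, but this time keeping careful track of the tensor-product structure $\WC = \WC_A \otimes \WC_B$ and the fact that the phaseshifter acts trivially on region $B$, i.e. $\PC_\theta \to \PC_\theta \otimes \Ibb_B$. Concretely, I would start from the amplitude expansion in Eq.~\eqref{eq:expandamp}, write the initial state as $\ket{\nv} = \ket{\nv_A}\otimes\ket{\nv_B}$ and the intermediate basis states as $\ket{\tv} = \ket{\tv_A}\otimes\ket{\tv_B}$, and observe that $\WC_{\tv,\nv} = (\WC_A)_{\tv_A,\nv_A}(\WC_B)_{\tv_B,\nv_B}$ vanishes unless $|\tv_B| = |\nv_B| = n_B$, hence $|\tv_A| = n - n_B = n_A$. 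Since the eigenvalue $t_\nu$ of the number operator $\hat n_\nu$ is bounded by the total photon number in region $A$ — because $\nu \in A$ — we get $0 \le t_\nu \le n_A$ for every term that survives the sum.

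The next step is to carry this restriction through to the frequency content of $f(\theta)$. The frequencies appearing are $\omega = t_\nu - t_\nu'$, and since both $t_\nu$ and $t_\nu'$ lie in $\{0,1,\dots,n_A\}$, the spectrum is contained in $\Omega = \{-n_A,\dots,0,\dots,n_A\}$, so $f(\theta)$ is a trigonometric polynomial of order at most $n_A$ rather than $n$. With this in hand, I would simply re-run the proof of Theorem~\ref{thm:GPSR}: apply Lemma~\ref{lemma:trig_interp} with $R = n_A$ in place of $R = n$, obtaining $f(\theta) = \sum_{\mu=-n_A}^{n_A} f(\theta_\mu) D(\theta - \theta_\mu)$ with $\theta_\mu = 2\pi\mu/(2n_A+1)$, then differentiate $k$ times and shift the interpolation nodes by $\theta$, exactly as in the original proof. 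The same substitution $R \to n_A$ in the modified-Dirichlet-kernel argument of Appendix~\ref{appendix:gpsr} (equivalently, Corollary~\ref{corollary:GPSR_odd}) gives the reduced first-order formula with $\theta_\mu = (2\mu-1)\pi/(2n_A)$.

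The one genuinely new ingredient — and the part I would be most careful about — is justifying that only region $A$ matters, i.e. that the photons in region $B$ are truly spectators. Here the key fact is that $\VC$ may be an arbitrary unitary mixing $A$ and $B$ (it is not assumed to factorize), so the argument cannot be "the $B$ modes never touch mode $\nu$." Rather, the point is purely about the \emph{generator}: $\hat n_\nu$ commutes with the projector onto the sector $|\tv_B| = n_B$ that is populated by $\WC$, and its eigenvalues on that sector are at most $n_A$. So the restriction on the Fourier spectrum comes from $\WC$ and the initial state alone, and $\VC$ and $O$ only redistribute weight among frequencies that are already present. I would state this as a short lemma — that $\PC_\theta\WC\ket{\nv}$ lives in the span of Fock states with at most $n_A$ photons in mode $\nu$ — and then the frequency bound, and hence both reduced formulas, follow immediately. (One should also note $R = n_A$ is exactly the natural reading of the $4R$ in the denominator of the stated first-order formula, so no further bookkeeping is needed there.)
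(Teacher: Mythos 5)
Your proof is correct and follows essentially the same route as the paper's: both exploit the factorization $\WC=\WC_A\otimes\WC_B$ together with photon-number conservation in each block to conclude that only Fock states with at most $n_A$ photons in mode $\nu$ are populated before the phaseshifter acts, so the Fourier spectrum is contained in $\{-n_A,\dots,n_A\}$ and the interpolation argument of Theorem~\ref{thm:GPSR} (and of Corollary~\ref{corollary:GPSR_odd}) goes through with $R=n_A$. The paper packages this as a reduction to an effective amplitude $\langle \sv'_A|\widetilde{\VC}\,\PC_\theta\WC_A|\nv_A\rangle$ on the subspace $\HC_A$, whereas you track the eigenvalues $t_\nu$ directly in the Fourier expansion and make explicit the (correct) observation that $\VC$ need not factorize --- a purely presentational difference.
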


\begin{proof}
    Let us note that for Fock states, the following states can be put in tensor product form $\ket{\nv}=\ket{\nv_A}\otimes\ket{\nv_B}$ and $|\sv'\rangle=|\sv'_A\rangle\otimes|\sv'_B\rangle$. Let us then compute again the amplitude appearing in Eq.~\eqref{eq:expval_thirdline} when we expand the expectation value of an arbitrary obsrvable
    \be\label{eq:reduced_amplitude}
        \langle\sv'| \VC\UC_{\theta}\WC\ket{\nv} = \langle \sv'_{A}|\otimes\langle \sv'_{B}| \VC\PC_{\theta}(\WC_A \otimes \WC_B)\ket{\nv_A}\otimes\ket{\nv_B} = \langle \sv'_A|\widetilde{\VC}_{\sv',\nv}\PC_\theta \WC_A\ket{\nv_A}
    \ee
    where $\widetilde{\VC}_{\sv',\nv} := \bra{\nv_B}\VC(\Ibb_A\otimes \WC_B)\ket{\nv_{i,B}} \in \Ubb(d_A)$ where $d_A$ is the dimension of the Hilbert space $\HC_A$ associated to region $A$, i.e. $d_A := \text{dim}(\HC_A) = \binom{n_A+m_A-1}{n_A}$. Then, by expanding the new amplitude in Eq.~\eqref{eq:reduced_amplitude}, the proofs are equivalent to Theorem~\ref{thm:GPSR} and Corollary~\ref{corollary:GPSR_odd} in the subspace $\HC_A$.
\end{proof}

\subsection{Number of shifts depends on the observable}\label{appendix:gpsr_reduced2}
In the following lemma, we highlight that for single-mode boson operators, number-ordered monomials can be expressed as linear combinations of normal-ordered ones. In doing so, we also show that the degree of the number-ordered operator corresponds to the highest degree appearing in the normal-ordered polynomial.
\begin{lemma}[Combinatorics of boson algebra, Ref.~\cite{katriel1974combinatorial}]\label{lemma:combinatorics_boson}
    A number-ordered monomial $O = \hat{n}^p$ is related to a polynomial in normal-ordered  operators of same degree $p$. In particular, the following identity holds:
    \be
        \hat{n}^p = \sum_{i=0}^n S(p,i) (\hat{a}\ad)^{i} \hat{a}^i\\,\qquad S(p,i) = \dfrac{1}{i!} \sum_{j=0}^i \binom{i}{j}(-1)^{i-j}j^p.
    \ee
    The coefficients $S(p,i)$ are Stirling numbers of the second kind.
\end{lemma}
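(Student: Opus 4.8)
The plan is to prove the identity by induction on the degree $p$, the two ingredients being the bosonic commutation relation in the form $[\hat{a},(\hat{a}\ad)^i]=i(\hat{a}\ad)^{i-1}$ and the defining recurrence of the Stirling numbers of the second kind, $S(p+1,i)=i\,S(p,i)+S(p,i-1)$, together with the boundary values $S(p,0)=\delta_{p,0}$ and $S(p,p)=1$. For the base case, at $p=0$ both sides equal $\Ibb$ and at $p=1$ both equal $\hat{n}=\hat{a}\ad\hat{a}$, consistent with $S(0,0)=S(1,1)=1$.

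For the inductive step I would assume $\hat{n}^p=\sum_{i\ge 0}S(p,i)(\hat{a}\ad)^i\hat{a}^i$ and multiply on the left by $\hat{n}=\hat{a}\ad\hat{a}$. The one manipulation that does any work is commuting a single annihilation operator past $(\hat{a}\ad)^i$: from $[\hat{a},(\hat{a}\ad)^i]=i(\hat{a}\ad)^{i-1}$ one obtains
\[
\hat{a}\ad\hat{a}\,(\hat{a}\ad)^i\hat{a}^i=(\hat{a}\ad)^{i+1}\hat{a}^{i+1}+i\,(\hat{a}\ad)^i\hat{a}^i.
\]
Summing this against $S(p,i)$ and shifting the index $i\mapsto i+1$ in the first term, the coefficient of $(\hat{a}\ad)^j\hat{a}^j$ in $\hat{n}^{p+1}$ becomes $S(p,j-1)+j\,S(p,j)=S(p+1,j)$, which closes the induction and gives the claimed operator identity. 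It then remains only to observe that $S(p,i)=\tfrac{1}{i!}\sum_{j=0}^i\binom{i}{j}(-1)^{i-j}j^p$ is the standard inclusion--exclusion (finite-difference) expression for the Stirling numbers of the second kind --- it equals $1/i!$ times the number of surjections $\{1,\dots,p\}\twoheadrightarrow\{1,\dots,i\}$ --- and hence satisfies exactly this recurrence with the stated boundary data; this can be checked directly or taken from the cited Ref.~\cite{katriel1974combinatorial}.

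An alternative, slightly slicker route avoids the recurrence altogether by testing the identity on the Fock basis. Since $\hat{n}^p\ket{k}=k^p\ket{k}$ and $(\hat{a}\ad)^i\hat{a}^i\ket{k}=k(k-1)\cdots(k-i+1)\ket{k}$ for every $k\in\Nbb$ (the right-hand side vanishing automatically when $i>k$), the operator identity is equivalent to the scalar identity $k^p=\sum_{i\ge 0}S(p,i)\,(k)_i$ for all $k\in\Nbb$, where $(k)_i$ denotes the falling factorial. This is precisely the combinatorial identity that characterizes the Stirling numbers of the second kind (the expansion of ordinary powers in the falling-factorial basis), so the result follows because the Fock states span the single-mode Hilbert space.

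There is no deep obstacle here: this is a classical normal-ordering identity. The only points requiring care are the off-by-one bookkeeping in the reindexing $i\mapsto i+1$ after applying the commutator, and reconciling the closed form for $S(p,i)$ quoted in the statement with its recurrence characterization. The apparent ambiguity in the upper summation limit (written as $n$ in the statement, but naturally $p$) is harmless, since $S(p,i)=0$ for $i>p$ and the extra terms contribute nothing.
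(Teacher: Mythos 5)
Your proposal is correct, and both of your routes differ from the paper's own derivation. The paper proves the identity with a generating-function trick: it evaluates $\sum_{i\ge 0}\frac{\lambda^i}{i!}(\hat{a}\ad)^i\hat{a}^i\ket{n}=(1+\lambda)^n\ket{n}=(1+\lambda)^{\hat{n}}\ket{n}$, substitutes $1+\lambda=e^{\mu}$, expands both sides in powers of $\mu$, and reads off the coefficient of $\mu^p$ to obtain the closed form for $S(p,i)$ directly — no prior knowledge of Stirling numbers is needed, since their explicit formula falls out of the double expansion. Your second route is the closest in spirit (both arguments really live on the Fock basis, where $(\hat{a}\ad)^i\hat{a}^i$ acts as the falling factorial $(k)_i$), but you instead import the classical identity $k^p=\sum_i S(p,i)(k)_i$ as known, whereas the paper rederives it; your first route, the operator induction via $[\hat{a},(\hat{a}\ad)^i]=i(\hat{a}\ad)^{i-1}$ and the recurrence $S(p+1,j)=jS(p,j)+S(p,j-1)$, is genuinely different mechanically and is arguably the most self-contained if one takes the recurrence as the definition of $S(p,i)$, though it then requires the extra (standard) step of reconciling the recurrence with the inclusion--exclusion closed form quoted in the lemma. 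All three arguments establish the same statement with the same boundary observation that $S(p,i)=0$ for $i>p$, and your remark that the upper summation limit written as $n$ is harmless is consistent with the paper, which writes the sum to infinity and then truncates for the same reason.
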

\begin{proof}
    We consider the following
    \be
        \sum_{i=0}^{\infty} \dfrac{\lambda^i}{i!} (\hat{a}\ad)^{i} \hat{a}^i \ket{n} = \sum_{i=0}^\infty \lambda^i \binom{n}{i} \ket{n} = [1+\lambda]^n\ket{n} = [1+\lambda]^{\hat{n}}\ket{n}\\.
    \ee
    Let us now take $1+\lambda:=e^{\mu}$. Expand the two sides of the above equation in powers of $\mu$
    \begin{align}
        \sum_{i=0}^\infty \dfrac{(e^{\mu}-1)^i}{i!} (\hat{a}\ad)^{i} \hat{a}^i &= e^{\mu \hat{n}} \\
        \sum_{i=0}^\infty \sum_{j=0}^i \dfrac{1}{i!} \binom{i}{j}(-1)^{i-j} e^{\mu j} (\hat{a}\ad)^{i} \hat{a}^i &= \sum_{p=0}^\infty \mu^p \dfrac{\hat{n}^p}{p!} \\
        \sum_{p=0}^\infty\sum_{i=0}^\infty \sum_{j=0}^i \dfrac{1}{i!} \binom{i}{j}(-1)^{i-j} \mu^p \dfrac{j^p}{p!} (\hat{a}\ad)^{i} \hat{a}^i &= \sum_{p=0}^\infty \mu^p \dfrac{\hat{n}^p}{p!}\\.
    \end{align}
    At order $p$ in $\mu$ we find the equality
    \be
        \hat{n}^p = \sum_{i=0}^\infty S(p,i) (\hat{a}\ad)^{i} \hat{a}^i
    \ee
    Noting that $S(p,i) \neq 0$ for $1 \leq i \leq p$, we complete the proof.
\end{proof}
\begin{corollary}[Observable-dependent frequency spectrum]\label{thm:appendix_observable_frequency}
    Consider an initial state $\ket{\nv}$ with $n$ photons, a unitary mode transformation given by $U=VP_{\theta}W$ and an observable as defined in Eq.~\eqref{eq:observable}. Then, the number of frequencies will be given by
    \be
    R = \mathrm{min}\{p,n\}\,,
    \ee
    where we recall that $p$ is the overall degree of the observable in powers of average photon number observables, each mode contributing with $p_i$ and $\sum_i p_i = p$.
\end{corollary}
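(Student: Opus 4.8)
The plan is to work in the Heisenberg picture and to track the way the parameter $\theta$ enters the conjugated observable $\UC\ad O\UC$ as a Laurent polynomial in $e^{i\theta}$, then intersect the resulting support with the generic photon-number bound of Section~\ref{sec:fourier_expansion}. First I would use that, for the Hilbert-space unitary $\UC$ whose mode transformation is $U=VP_\theta W$, conjugation acts linearly on mode operators, $\UC\ad a_i\ad\UC=\sum_j u_{ji}\,a_j\ad$ and hence $\UC\ad a_i\UC=\sum_j \bar u_{ji}\,a_j$, where the $u_{ji}$ are the entries of $U$. By Observation~\ref{obs:one_degree_unitary}, each $u_{ji}$ is a degree-one polynomial in $e^{i\theta}$ and each $\bar u_{ji}$ a degree-one polynomial in $e^{-i\theta}$. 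Consequently $\UC\ad n_i\UC=\UC\ad a_i\ad\UC\,\UC\ad a_i\UC=\sum_{j,k}u_{ji}\bar u_{ki}\,a_j\ad a_k$, and every scalar coefficient $u_{ji}\bar u_{ki}$ is a Laurent polynomial in $e^{i\theta}$ supported on the exponents $\{-1,0,1\}$.

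Next I would assemble the full observable. Since $O=\mathfrak f(\nv,p)=\prod_{i=1}^m n_i^{p_i}$ with $\sum_i p_i=p$, unitarity gives $\UC\ad O\UC=\prod_{i=1}^m(\UC\ad n_i\UC)^{p_i}$, a product of exactly $p$ factors each of the form $\sum_{j,k}u_{ji}\bar u_{ki}\,a_j\ad a_k$. The $\theta$-dependent scalars commute with every mode operator, so the $e^{i\theta}$-exponents add across the $p$ factors regardless of operator ordering; hence $\UC\ad O\UC$ is a Laurent polynomial in $e^{i\theta}$ with exponents contained in $\{-p,\dots,p\}$ and operator-valued coefficients. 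Sandwiching between $\bra{\nv}$ and $\ket{\nv}$ turns these into c-numbers, so $f(\theta)=\sum_{\omega=-p}^{p}c_\omega e^{i\omega\theta}$: at most $p$ positive frequencies. An equivalent route, and the one that also covers the non-Hermitian generalization $(a_i\ad)^{q_i}a_i^{r_i}$, is to first apply Lemma~\ref{lemma:combinatorics_boson} to rewrite $O$ as a normal-ordered polynomial of the same degree $p$, $O=\sum(\cdots)\prod_i(a_i\ad)^{k_i}a_i^{k_i}$ with $\sum_i k_i\le p$, and then note that conjugating $(a_i\ad)^{k_i}$ produces degree $k_i$ in $e^{i\theta}$ and $a_i^{k_i}$ degree $k_i$ in $e^{-i\theta}$, so each term has $e^{i\theta}$-support in $\{-\sum_i k_i,\dots,\sum_i k_i\}\subseteq\{-p,\dots,p\}$.

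Finally I would combine this with the bound established in Section~\ref{sec:fourier_expansion}: irrespective of the observable, the eigenvalues of $\hat n_\nu$ on the $n$-photon sector lie in $\{0,\dots,n\}$, so the frequency $\omega=t_\nu-t_\nu'$ always lies in $\{-n,\dots,n\}$. Intersecting the two supports gives $R\le\min\{p,n\}$, which is already enough to run Theorem~\ref{thm:GPSR} and Corollary~\ref{corollary:GPSR_odd} with $n$ replaced by $\min\{p,n\}$. For the exact statement $R=\min\{p,n\}$ I would argue saturation: the coefficient $c_{\min\{p,n\}}$ is a nonzero analytic (indeed polynomial) function of the remaining entries of $V$ and $W$, exhibited by one convenient choice of $V,W$ (and when $p\ge n$ this is precisely the full-spectrum statement already used in Section~\ref{sec:fourier_expansion}), hence nonzero for generic parameters.

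I expect the main obstacle to be twofold. The first is the bookkeeping: making the grading argument watertight requires being explicit that it is the commuting scalar prefactors, not the noncommuting $a_j\ad a_k$ operators, that carry the $e^{i\theta}$-degree, so that the $p$-fold product genuinely caps the exponent at $p$ independently of ordering and of any operator identities. The second is the saturation step — for fine-tuned $V,W$ the top frequency can vanish — so if one wants literal equality rather than the upper bound one must invoke genericity; the robust, PSR-relevant content of the corollary is the bound $R\le\min\{p,n\}$.
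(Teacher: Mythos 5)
Your proposal follows essentially the same route as the paper's own proof: conjugate the observable in the Heisenberg picture, use Observation~\ref{obs:one_degree_unitary} to grade each entry $u_{ji}$ (resp.\ $u^*_{ki}$) as degree one in $e^{i\theta}$ (resp.\ $e^{-i\theta}$), and conclude that the $p$-fold product caps the Fourier support at $\{-p,\dots,p\}$, intersected with the generic $n$-photon bound. Your added caveat that the argument strictly yields $R\le\min\{p,n\}$, with equality only for generic $\VC,\WC$, is a fair point that the paper's proof also glosses over, but it does not change the substance of the result as used in the parameter-shift rule.
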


\begin{proof}
    We only need to focus on the case where $p<n$, since if $p \geq n$ we can resort to the result provided in Theorem \ref{thm:GPSR}. To start with, we need to investigate how each term $n_i^{p_i}$ transforms under $U$. We have that
    \begin{align}
        n_i^{p_i} = (a_i\ad a_i)^{p_i} \xmapsto{\hspace{4pt} U\hspace{4pt}} &((U^T\av\ad)_i(U\ad\av)_i)^{p_i} \\
        &= (\sum_{j,k}u_{ji}u^*_{ki}a_j\ad a_k)^{p_i} \\
        &= \sum_{\substack{\{\kappa_{jk}\},\\
        \sum_{j,k}\kappa_{jk} = p_i}}\binom{p_i}{\vec{\kappa}}\prod_{j,k}(u_{ji}u^*_{ki}a_j^\dagger a_k)^{\kappa_{jk}},
    \end{align}
    where $\binom{p_i}{\vec{\kappa}}=\frac{p_i!}{\kappa_{11}!\dots\kappa_{mm}!}$. \\
    In the first equality we used the definition of a  transformation under $U$, and in the second equality we used the generalized multinomial expansion. We can now put together all the terms contained in the observable defined in Eq.~\eqref{eq:observable}
    \begin{align}
        \mathfrak{f}(\nv,p) \xmapsto{\hspace{4pt} U\hspace{4pt}} \mathfrak{f}(\nv',p) &= \prod_{i} n_i'^{p_i} \\
        &= \prod_i \sum_{\substack{\{\kappa_{jk}\} \\
        \sum_{j,k}\kappa_{jk} = p_i}}\binom{p_i}{\vec{\kappa}}\prod_{j,k}(u_{ji}u^*_{ki}a_j\ad a_k)^{\kappa_{jk}} \\
        & = \sum_{\substack{\{\kappa_{jk}^{(i)}\} \\
        \sum_{j,k}\kappa_{jk}^{(i)} = p_i}} \prod_{i,j,k} \alpha_i(u_{ji}u^*_{ki}a_j\ad a_k)^{\kappa_{jk}^{(i)}}
    \end{align}
    where we defined $\alpha_i = \binom{p_i}{\vec{\kappa}^{(i)}}^{1/m^2}$. We remark that the summation is not anymore for a fixed $i$, but over all $i$. By noticing that, for each configuration $\{\kappa^{(i)}_{jk}\}$, we verify that $\sum_{i,j,k}\kappa^{(i)}_{jk} = \sum_ip_i = p$, it is clear that $p$ entries of the unitary transformation, 
    as well as $p$ entries of its complex conjugate, will be multiplied together. This implies that $\mathfrak{f}(\nv,p)\sim \{e^{\pm i\theta},\dots,e^{\pm ip\theta} \}$, hence the result.
\end{proof}
It may now become clear how to determine the number of frequencies for a given observable, even when the number of creation and annihilitation operators do not match. Consider in fact a hermitian observable such as the \textit{Bogolioubov} Hamiltonian
 \be
H := \sum_{i,j}h_{ij}a\ad_i a_j + g_{ij} a\ad_i a\ad_j + g_{ij}^* a_i a_j \,,
\ee
where $[h_{ij}]$ is a hermitian matrix and $[g_{ij}]$ is a symmetric matrix. In this case, a single term in the sum will not have the same number of creation and annihilation operators, e.g. $a_i\ad a_j\ad\,,\, a_i a_j$. We then extend the definition of the observable to a normal-ordered one
\be\label{eq:def_normal_observable}
\mathfrak{f}(\vec{a\ad},\vec{a},p) := f(a_1\ad)^{q_1} a^{r_1}_1 \dots (a_m\ad)^{q_{m}} a^{r_{m}}_m + f^*(a_1\ad)^{r_1} a^{q_1}_1 \dots (a_m\ad)^{r_{m}} a^{q_{m}}_m \,,f\in \Cbb \,,\qquad q=\sum_i q_i \,,\, r=\sum_i r_i
\ee
where $p= \max\{ q,r\}$.

\begin{corollary}
    Consider an initial state $\ket{\nv}$ with $n$ photons, a unitary mode transformation given by $U=VP_{\theta}W$ and an observable as defined in Eq.~\eqref{eq:def_normal_observable}. Then, the number of frequencies will be given by
    \be
    R = \mathrm{min}\{p,n\}\\,
    \ee
    where we recall that $p$ is the overall degree of the observable 
    \be
        p= \max\{q,r\} =\max\bigg\{\sum_i q_i,\sum_i r_i\bigg\}
    \ee
\end{corollary}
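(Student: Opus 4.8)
The plan is to recycle, almost verbatim, the argument of Corollary~\ref{thm:appendix_observable_frequency}, working in the Heisenberg picture: I would push the mode transformation $U=VP_{\theta}W$ onto the creation and annihilation operators appearing in Eq.~\eqref{eq:def_normal_observable} and simply bookkeep the degree in $e^{\pm i\theta}$ of the resulting scalar coefficients. The one genuinely new feature compared with the number-ordered case is that each of the two monomials in Eq.~\eqref{eq:def_normal_observable} contains an \emph{unequal} number of creation and annihilation operators, so each carries an \emph{asymmetric} band of frequencies; the task is to show that the two bands together still lie inside $\{-p,\dots,p\}$ with $p=\max\{q,r\}$.

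First I would substitute
\be
a_i\ad\;\longmapsto\;\sum_j u_{ji}\,a_j\ad\,,\qquad a_i\;\longmapsto\;\sum_k u_{ki}^*\,a_k\,,
\ee
and invoke Observation~\ref{obs:one_degree_unitary}: because $P_{\theta}$ multiplies only the $\nu$-th mode, every $u_{ji}$ is affine in $e^{i\theta}$ and every $u_{ki}^*$ is affine in $e^{-i\theta}$. Expanding the first monomial $f\prod_i(a_i\ad)^{q_i}a_i^{r_i}$ in this way produces a sum of operator monomials in $\{a_j\ad,a_k\}$ whose scalar prefactors are each a product of exactly $q=\sum_i q_i$ factors affine in $e^{i\theta}$ and $r=\sum_i r_i$ factors affine in $e^{-i\theta}$; such a prefactor is therefore a Laurent polynomial supported on $\{e^{-ir\theta},\dots,e^{iq\theta}\}$. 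The same computation applied to the second monomial $f^*\prod_i(a_i\ad)^{r_i}a_i^{q_i}$ gives support $\{e^{-iq\theta},\dots,e^{ir\theta}\}$ — indeed this term is the complex conjugate of the first, so its band is the mirror image. Taking the Fock expectation value $\fexpval{\nv}{\cdot}{\nv}$ can only delete frequencies, never create new ones, so $f(\theta)$ is a trigonometric polynomial supported on $\{-p,\dots,p\}$ with $p=\max\{q,r\}$; hence $\Omega^+\subseteq\{1,\dots,p\}$ and $R\leq p$. Combining this with the universal $n$-photon bound of Theorem~\ref{thm:GPSR} ($R\leq n$) gives $R\leq\min\{p,n\}$, and — exactly as in Corollary~\ref{thm:appendix_observable_frequency} — the relevant coefficients are generically non-vanishing, so the bound is saturated and $R=\min\{p,n\}$. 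Hermiticity of $O$ guarantees $f(\theta)\in\Rbb$ and a symmetric spectrum, as required by the interpolation framework.

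The step I expect to need the most care is convincing oneself that operator ordering is irrelevant: all of the $\theta$-dependence sits in the scalar coefficients $u,u^*$, while the canonical commutation relations contribute only $\theta$-independent $c$-numbers, so reordering a monomial to, say, normal order never widens its frequency band. Consequently — and this is the pleasant part — no combinatorial identity of the Katriel type (Lemma~\ref{lemma:combinatorics_boson}) is actually needed here: the count reduces to ``(number of $u$ factors) against (number of $u^*$ factors)'', i.e. $q$ versus $r$. The remaining subtlety is purely notational, namely getting the direction of the Heisenberg conjugation and the placement of $e^{i\theta}$ versus $e^{-i\theta}$ right, which affects only the cosmetic labelling of frequencies and not their number.
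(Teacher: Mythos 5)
Your proposal is correct and follows essentially the same route as the paper: apply the mode transformation in the Heisenberg picture, count the $q$-fold product of entries $u_{ji}$ (affine in $e^{i\theta}$) against the $r$-fold product of entries $u^*_{ki}$ (affine in $e^{-i\theta}$) for each of the two conjugate monomials, and conclude the frequency band is $\{-p,\dots,p\}$ with $p=\max\{q,r\}$, intersected with the universal $n$-photon bound. Your explicit remarks that commutators contribute only $\theta$-independent $c$-numbers and that taking the Fock expectation value cannot enlarge the spectrum are left implicit in the paper but are accurate and tighten the argument.
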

\begin{proof}
    Similarly to the prof of Theorem~\ref{thm:appendix_observable_frequency}, applying the mode transformation to the first term in Eq.~\eqref{eq:def_normal_observable} will result in a $q$-fold product of the matrix entries $u_{ij}$ and a $r$-fold product of the matrix entries $u_{ij}^*$, having therefore frequencies $\{e^{-ir\theta},\dots,e^{iq\theta}\}$. The second term will similarly have $\{e^{-iq\theta},\dots,e^{ir\theta}\}$. In total, the observable will have frequencies $\{e^{-ip\theta}, e^{ip\theta}\}$
\end{proof}
\end{document}